\newtheorem{theorem}{Theorem}
\newtheorem{proposition}{Proposition}
\newtheorem{lemma}{Lemma}
\newtheorem{remark}{Remark}
\newtheorem{definition}{Definition}
\begin{document}

\title{
~\\  
Robust Approximate Simulation for Hierarchical Control of Linear Systems under Disturbances}

\author{Vince Kurtz, Patrick M. Wensing, and Hai Lin}

\maketitle

\IEEEpeerreviewmaketitle

\begin{abstract}
    Approximate simulation, an extension of simulation relations from formal methods to continuous systems, is a powerful tool for hierarchical control of complex systems. Finding an approximate simulation relation between the full ``concrete'' system and a simplified ``abstract'' system establishes a bound on the output error between the two systems, allowing one to design a controller for the abstract system while formally certifying performance on the concrete system. However, many real-world control systems are subject to external disturbances, which are not accounted for in the standard approximate simulation framework. We present a notion of robust approximate simulation, which considers external disturbances to the concrete system. We derive output error bounds for the case of linear systems subject to two types of additive disturbances: bounded disturbances and a sequence of (unbounded) impulse disturbances. We demonstrate the need for robust approximate simulation and the effectiveness of our proposed approach with a simulated robot motion planning example.  
\end{abstract}

\section{Introduction}

\subsection{Motivation}

Complex and high-dimensional systems are often difficult to control directly. This leads naturally to hierarchical control strategies, where a simpler (abstract) system model is used in the controller design process. One particularly useful framework for hierarchical control is \textit{approximate simulation}~\cite{girard2009hierarchical}. An approximate simulation relation between the abstract system and the full (concrete) system certifies that the outputs of both systems can remain $\epsilon$-close.  

Approximate simulation relations give rise to control architectures like that shown in Figure \ref{fig:hierarchical_control}. The interface, which maps controls from the abstract system to the concrete system, is designed to enforce $\epsilon$-closeness of the outputs. Given such an interface, we can design a controller for the abstract system and guarantee that the concrete system's output will remain $\epsilon$-close. Furthermore, approximate simulation offers elegant connections to other areas of control theory, as an approximate simulation relation can be certified by finding a Lyapunov-like simulation function, which bounds the output error between the two systems. 

Approximate simulation is a powerful framework for hierarchical controller design. Since it builds off of the notions of simulation and bisimulation relations from formal methods, it can be efficiently applied to discrete transition systems and hybrid systems, as well as continuous systems. Recent results suggest that approximate simulation can be used for control of complex, high-dimensional, and highly nonlinear systems such as legged robots \cite{kurtz2019formal}. However, the traditional notion of approximate simulation does not account for disturbances to the concrete system. This means that any guarantees regarding $\epsilon$-closeness of the outputs may not be valid when disturbances enter the concrete system. This property is especially important when it comes to robotic and cyber-physical systems which operate in the real world, and are thus subject to a variety of disturbances.

In this work, we extend the approximate simulation framework to account for disturbances to the concrete system. We present this extension as a general notion of robust approximate simulation for continuous systems, and provide specific results for two special cases: linear systems subject to bounded additive disturbances, and linear systems subject to additive impulse disturbances. 

\begin{figure}
    \centering
    \begin{tikzpicture}
        \node[draw, minimum height=1cm] (abstract) {Abstract System: $\Sigma_2$};
        \node[draw, below=of abstract, minimum width=5cm] (interface) {Interface: $u_{\mathcal{V}}$};
        \node[draw, below=of interface, minimum height=1cm] (concrete) {Concrete System: $\Sigma_1$};
        
        \draw[<-,thick] (abstract.north) -- node[pos=0.7,left] {$\mathbf{u}_2$} ++(0cm,1cm);
        \draw[->,thick] ([yshift=0.3cm]abstract.north) -| ([xshift=-2.0cm]interface.north);
        \draw[->,thick] (abstract) -- node[pos=0.5,left] {$\mathbf{x}_2$} (interface);
        \draw[->,thick] (interface) -- node[pos=0.5,left] {$\mathbf{u}_1$} (concrete);
        \draw[<-,thick] (concrete.west) -- node[pos=0.8,above] {$\mathbf{d}$} ++(-1.3cm,0cm);
        \draw[->,thick] (concrete.south) -- node[pos=0.7,left] {$\mathbf{x}_1$} ++(0cm,-0.5cm) -| ++ (2.8cm,3.5cm) -| ([xshift=2cm]interface.north);
        
        \draw[->,thick,dashed] (abstract.east) -- node[pos=1,right] {$\mathbf{y}_2$} ++(1.7cm,0cm);
        \draw[->,thick,dashed] (concrete.east) -- node[pos=1,right] {$\mathbf{y}_1$} ++(1.7cm,0cm);
    \end{tikzpicture} 
    \caption{Hierarchical control system architecture considered in this work. We extend the notion of approximate simulation~\cite{girard2011approximate} to account for disturbances $\mathbf{d}$ to the concrete system.}
    \label{fig:hierarchical_control}
\end{figure}
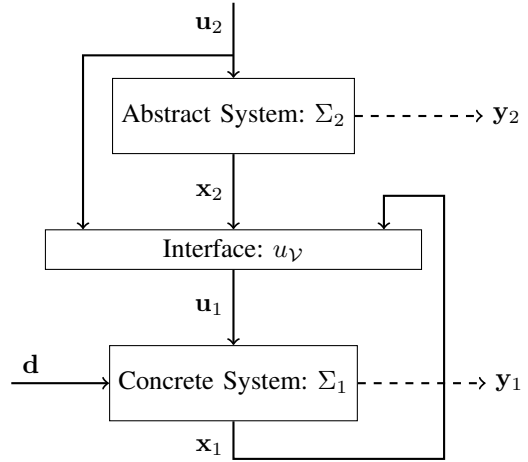

\subsection{Related Work}

The notion of approximate simulation has its roots in the formal methods literature, where exact simulation relations are defined in terms of transition systems \cite{baier2008principles}. Such simulation and bisimulation (both systems simulate each other) relations have had a powerful impact on model checking and formal synthesis, with widespread application in software and design verification \cite{baier2008principles,clarke2018model}. Furthermore, simulation and bisimulation relations have been successfully applied to controller synthesis from temporal logic specifications \cite{da2019active,alonso2018reactive,fainekos2005temporal}. If two systems can be shown to be (bi)similar, the formal design process can consider only the simpler model, significantly improving computational efficiency while also maintaining formal guarantees.

More recently, there has been a growing movement to apply these impactful techniques from formal methods to (continuous) control systems. Large strides have been made in this direction through the use of approximate simulation and approximate bisimulation \cite{girard2005approximate,girard2007approximate,fainekos2007hierarchical}. Since requiring two continuous systems to have exactly the same output may be too strict, approximate (bi)simulation relaxes this requirement to enforce only that the outputs remain $\epsilon$-close. Beyond enabling the application of formal methods techniques to continuous systems, approximate simulation provides an elegant bridge between formal methods and classical control theory: approximate simulation can be equivalently defined in terms of a Lyapunov-like simulation function \cite{girard2011approximate}. 

Hierarchical control and system equivalence have been studied more directly in the context of control systems as well. Notions such as asymptotic model matching \cite{di1994asymptotic} enforce output global asymptotic stability of the joint system \cite{angeli2004uniform}. Since approximate simulation requires only closeness of the system outputs, asymptotic model matching implies approximate simulation but not vice versa \cite{girard2009hierarchical}. Approximate simulation is most closely related to the notion of Input-to-Output Stability (IOS) \cite{sontag1999notions}. Specifically, the simulation function that certifies approximate simulation can be viewed as an IOS Lyapunov function \cite{sontag2000lyapunov} of the joint system. The key difference between IOS and approximate simulation is that the input $\mathbf{u}_2$ to the abstract system is considered a control parameter in the framework of approximate simulation, while it is treated as an unknown disturbance in the IOS framework \cite{girard2009hierarchical}. Furthermore, approximate simulation provides a connection with simulation relations for more general transition systems. Given this emphasis, the key weakness of the standard approximate simulation framework is that disturbances to the concrete system are not considered. Drawing inspiration from the robust control literature, we address this gap by proposing the notion of \textit{robust approximate simulation}. To the best of our knowledge, this is the first work to extend the notion of approximate simulation to account for external disturbances to the concrete system.

The remainder of this paper is organized as follows. Section~\ref{sec:background} introduces necessary background on approximate simulation, with particular attention to the case of linear systems. Section~\ref{sec:main_results} presents our main results, which include a definition of robust approximate simulation for general continuous systems and detailed derivation of error bounds for linear systems subject to bounded and impulse disturbances. We illustrate these results with a robot motion planning example in Section~\ref{sec:example} and conclude with Section~\ref{sec:conclusion}.

\section{Background}\label{sec:background}

\subsection{Approximate Simulation}\label{subsec:approximate_simulation}
The classical notion of approximate simulation is defined in terms of two systems, $\Sigma_1$ and $\Sigma_2$:
\begin{equation}\label{eq:nonlinear_system}
    \Sigma_1 :
    \begin{cases}
        \dot{\mathbf{x}}_1 = f_1(\mathbf{x}_1,\mathbf{u}_1) \\
        \mathbf{y}_1 = g_1(\mathbf{x}_1)
    \end{cases},~~
    \Sigma_2 :
    \begin{cases}
        \dot{\mathbf{x}}_2 = f_2(\mathbf{x}_2,\mathbf{u}_2) \\
        \mathbf{y}_2 = g_2(\mathbf{x}_2)
    \end{cases},
\end{equation}
where $\mathbf{x}_i \in \mathbb{R}^{n_i}$ are the system states, $\mathbf{u}_i \in \mathbb{R}^{p_i}$ are the control inputs, and $\mathbf{y}_i \in \mathbb{R}^m$ are the system outputs. Note that the states may be different sizes but the outputs must be the same size. Without loss of generality, we consider $\Sigma_1$ to be a more complex ``concrete'' model and $\Sigma_2$ to be the simpler ``abstract'' model. This typically means that $n_2 < n_1$.  

Approximate simulation is defined in terms of a Lyapunov-like simulation function $\mathcal{V}$ and an interface function $u_\mathcal{V}$:
\begin{definition}[\citet{girard2011approximate}]\label{definition:simulation_function}
    Consider two systems of the form (\ref{eq:nonlinear_system}). Let $\mathcal{V} : \mathbb{R}^{n_1} \times \mathbb{R}^{n_2} \to \mathbb{R}^+$ be a smooth function and $u_\mathcal{V} : \mathbb{R}^{p_2} \times \mathbb{R}^{n_1} \times \mathbb{R}^{n_2} \to \mathbb{R}^{p_1}$ be a continuous function. $\mathcal{V}$ is a simulation function of $\Sigma_2$ by $\Sigma_1$ and $u_\mathcal{V}$ is an associated interface if there exists a class-$\kappa$ function\footnote{A function $\gamma : \mathbb{R}^+ \to \mathbb{R}^+$ is a class-$\kappa$ function if it is continuous, strictly increasing, and $\gamma(0)=0$.} $\gamma$ such that for all $\mathbf{x}_1, \mathbf{x}_2 \in \mathbb{R}^{n_1} \times \mathbb{R}^{n_2}$,
    \begin{equation}\label{eq:output_error_bound}
        \mathcal{V}(\mathbf{x}_1,\mathbf{x}_2) \geq \|g_1(\mathbf{x}_1)-g_2(\mathbf{x}_2)\|,
    \end{equation}
    and for all $\mathbf{u}_2 \in \mathbb{R}^{p_2}$ satisfying $\gamma(\|\mathbf{u}_2\|) < \mathcal{V}(\mathbf{x}_1, \mathbf{x}_2)$,
    \begin{equation}\label{eq:simulation_fcn_decreasing}
        \frac{\partial\mathcal{V}}{\partial\mathbf{x}_2}f_2(\mathbf{x}_2, \mathbf{u}_2) +\frac{\partial\mathcal{V}}{\partial\mathbf{x}_1}f_1(\mathbf{x}_1, u_{\mathcal{V}}(\mathbf{u}_2, \mathbf{x}_1, \mathbf{x}_2)) < 0.
    \end{equation}
\end{definition}
\begin{definition}[\citet{girard2011approximate}]
    $\Sigma_1$ approximately simulates $\Sigma_2$ if there exists a simulation function $\mathcal{V}$ of $\Sigma_2$ by $\Sigma_1$.
\end{definition}

The conditions (\ref{eq:output_error_bound}-\ref{eq:simulation_fcn_decreasing}) essentially state that the simulation function bounds the output error, and as long as the input to the abstract system is not too large, the simulation function will be decreasing. 

If $\Sigma_1$ approximately simulates $\Sigma_2$, we can bound the output error of the two systems as follows:
\begin{theorem}[\citet{girard2009hierarchical}]\label{theorem:bound}
    Consider two systems of the form (\ref{eq:nonlinear_system}). Let $\mathcal{V}$ be a simulation function of $\Sigma_2$ by $\Sigma_1$ and $u_\mathcal{V}$ be an associated interface. Let $\mathbf{u}_2(t)$ be an admissible input of $\Sigma_2$ with associated state and output trajectories $\mathbf{x}_2(t)$ and $\mathbf{y}_2(t)$. Let $\mathbf{x}_1(t)$ be a state trajectory of $\Sigma_1$ satisfying
    \begin{equation*}
        \dot{\mathbf{x}}_1 = f_1(\mathbf{x}_1, u_\mathcal{V}(\mathbf{u}_2, \mathbf{x}_1, \mathbf{x}_2))
    \end{equation*}
    and $\mathbf{y}_1(t)$ be the associated output trajectory. Then 
    \begin{equation*}
        \|\mathbf{y}_1(t)-\mathbf{y}_2(t)\| \leq \epsilon,
    \end{equation*}
    where
    \begin{equation}\label{eq:error_bound}
        \epsilon = \max\big\{ \mathcal{V}\left(\mathbf{x}_1(0),\mathbf{x}_2(0)\right), \gamma(\|\mathbf{u}_2\|_\infty)\big\}.
    \end{equation}
\end{theorem}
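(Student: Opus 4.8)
The plan is to show that the simulation function $\mathcal{V}(\mathbf{x}_1(t), \mathbf{x}_2(t))$ stays bounded by $\epsilon$ along the coupled trajectories, and then invoke the output-error inequality (\ref{eq:output_error_bound}) to conclude $\|\mathbf{y}_1(t) - \mathbf{y}_2(t)\| \leq \mathcal{V}(\mathbf{x}_1(t), \mathbf{x}_2(t)) \leq \epsilon$. The whole argument therefore reduces to a differential-inequality / invariant-set argument on the scalar quantity $v(t) := \mathcal{V}(\mathbf{x}_1(t), \mathbf{x}_2(t))$, where $\mathbf{x}_1(t)$ evolves under the interface control $u_\mathcal{V}(\mathbf{u}_2, \mathbf{x}_1, \mathbf{x}_2)$ and $\mathbf{x}_2(t)$ evolves under $\mathbf{u}_2(t)$.

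\emph{First} I would establish that the sublevel set $\{v \leq \epsilon\}$ is forward invariant, where $\epsilon = \max\{\mathcal{V}(\mathbf{x}_1(0),\mathbf{x}_2(0)), \gamma(\|\mathbf{u}_2\|_\infty)\}$. The key is condition (\ref{eq:simulation_fcn_decreasing}): whenever $\gamma(\|\mathbf{u}_2\|) < \mathcal{V}(\mathbf{x}_1,\mathbf{x}_2)$, the total time derivative of $\mathcal{V}$ along the coupled dynamics (which is exactly the left-hand side of (\ref{eq:simulation_fcn_decreasing}) by the chain rule) is strictly negative. \emph{Second}, I would argue by contradiction or via a standard comparison argument: since $\gamma(\|\mathbf{u}_2(t)\|) \leq \gamma(\|\mathbf{u}_2\|_\infty) \leq \epsilon$ for all $t$, any time at which $v(t) > \epsilon$ would satisfy $\gamma(\|\mathbf{u}_2(t)\|) < v(t)$, forcing $\dot{v}(t) < 0$. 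Combined with the initial condition $v(0) \leq \epsilon$, this prevents $v$ from ever exceeding $\epsilon$: the trajectory cannot cross the boundary $v = \epsilon$ outward, because on that boundary (and just above it) the derivative is strictly decreasing.

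\emph{Third}, having shown $v(t) \leq \epsilon$ for all $t \geq 0$, I would apply (\ref{eq:output_error_bound}) pointwise in time to obtain $\|\mathbf{y}_1(t) - \mathbf{y}_2(t)\| = \|g_1(\mathbf{x}_1(t)) - g_2(\mathbf{x}_2(t))\| \leq \mathcal{V}(\mathbf{x}_1(t),\mathbf{x}_2(t)) \leq \epsilon$, which is the claimed bound.

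\emph{The main obstacle} is making the invariance argument rigorous at the boundary, rather than relying on a naive ``$\dot{v} < 0$ so $v$ decreases'' intuition. The subtlety is that the strict-decrease condition (\ref{eq:simulation_fcn_decreasing}) holds only on the open region $\{\gamma(\|\mathbf{u}_2\|) < v\}$, so if $\gamma(\|\mathbf{u}_2\|_\infty)$ equals $v$ at the boundary the derivative condition is not directly supplied. The clean way to handle this is a comparison-lemma / contradiction argument: suppose for contradiction that $v(t^*) > \epsilon$ for some $t^*$; let $t_0 = \sup\{t < t^* : v(t) \leq \epsilon\}$, which is well-defined since $v(0) \leq \epsilon$ and $v$ is continuous. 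On $(t_0, t^*]$ we have $v(t) > \epsilon \geq \gamma(\|\mathbf{u}_2\|_\infty) \geq \gamma(\|\mathbf{u}_2(t)\|)$, so (\ref{eq:simulation_fcn_decreasing}) applies and $\dot{v} < 0$ throughout that interval, forcing $v(t^*) < v(t_0^+) = \epsilon$, a contradiction. One should also note the smoothness of $\mathcal{V}$ and continuity of $u_\mathcal{V}$ (assumed in Definition \ref{definition:simulation_function}) guarantee that $v(t)$ is absolutely continuous and its derivative coincides with the expression in (\ref{eq:simulation_fcn_decreasing}), so the differential inequality is legitimate.
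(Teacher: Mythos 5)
Your proposal is correct and follows essentially the same route the paper uses: the paper cites this theorem to Girard and Pappas but reproduces the identical argument in its proof of Theorem~\ref{theorem:bound_bounded} (contradiction via the first crossing time $\tau'$ where $\mathcal{V}=\epsilon$, strict decrease of $\mathcal{V}$ on $(\tau',\tau]$ from condition (\ref{eq:simulation_fcn_decreasing}) since $\gamma(\|\mathbf{u}_2\|)\leq\epsilon<\mathcal{V}$, integration to contradict $\mathcal{V}(\tau)>\epsilon$, then the pointwise output bound (\ref{eq:output_error_bound})). Your added care about regularity of $v(t)$ at the boundary is a welcome refinement but does not change the substance of the argument.
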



\subsection{Approximate Simulation for Linear Systems}\label{subsec:linear_simulation}

Finding a simulation function and an interface for two arbitrary systems is a difficult and open problem, though some promising results with sum-of-squares programming do exist \cite{girard2005approximate,murthy2015computing}. For linear systems, however, there are well-defined conditions for the existence of a simulation function \cite{girard2009hierarchical}, which we summarize below.

Consider the case when both the concrete and the abstract systems are linear, i.e.,
\begin{equation}\label{eq:linear_systems}
    \Sigma_i :
    \begin{cases}
        \dot{\mathbf{x}}_i = A_i\mathbf{x}_i + B_i\mathbf{u}_i \\
        \mathbf{y}_i = C_i\mathbf{x}_i
    \end{cases}, ~~~ i = \{1,2\}.
\end{equation}
In this case, there are strong results regarding the form of a simulation function. First note the following Lemma:
\begin{lemma}[\citet{girard2007approximate}]\label{lemma:M}
    If $\Sigma_1$ is stabilizable with feedback gain $K$, i.e. $(A_1 + B_1K)$ is Hurwitz, then there exists a positive definite symmetric matrix $M$ and positive scalar constant $\lambda$ such that the following hold:
    \begin{gather}
        M \geq C_1^TC_1,\label{eq:M_bound}, \\
        (A_1+B_1K)^TM + M(A_1+B_1K) \leq -2 \lambda M.\label{eq:M_lambda_decrease}
    \end{gather}
\end{lemma}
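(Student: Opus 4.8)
The plan is to construct $M$ explicitly as a Lyapunov matrix associated with the Hurwitz closed-loop dynamics $A_1 + B_1 K$, and then rescale it so that the output-domination condition \eqref{eq:M_bound} is also met. First I would fix a decay rate $\lambda > 0$ small enough that $A_1 + B_1 K + \lambda I$ remains Hurwitz; this is possible because the eigenvalues of $A_1 + B_1 K$ all have strictly negative real part, so there is a gap $\lambda$ that can be added while preserving stability. With such a $\lambda$ in hand, standard Lyapunov theory guarantees that the shifted matrix equation
\begin{equation*}
    (A_1 + B_1 K + \lambda I)^T P + P (A_1 + B_1 K + \lambda I) = -Q
\end{equation*}
admits a unique positive definite symmetric solution $P$ for any chosen positive definite $Q$. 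Expanding this equation and moving the $\lambda$ terms to the right-hand side shows immediately that $P$ satisfies $(A_1 + B_1 K)^T P + P(A_1 + B_1 K) = -Q - 2\lambda P \leq -2\lambda P$, which is exactly the inequality \eqref{eq:M_lambda_decrease} with $M$ replaced by $P$.

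The remaining obstacle is to simultaneously enforce the domination condition \eqref{eq:M_bound}, namely $M \geq C_1^T C_1$. The key observation is that \eqref{eq:M_lambda_decrease} is homogeneous in $M$: if $P$ satisfies it, so does any positive multiple $\alpha P$ for $\alpha > 0$. I would therefore set $M = \alpha P$ and choose the scaling $\alpha$ large enough that $\alpha P \geq C_1^T C_1$. Such an $\alpha$ exists because $P$ is positive definite, so its smallest eigenvalue is strictly positive; concretely, taking $\alpha \geq \lambda_{\max}(C_1^T C_1) / \lambda_{\min}(P)$ ensures $\alpha P \succeq \lambda_{\max}(C_1^T C_1) I \succeq C_1^T C_1$. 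Since $M = \alpha P$ inherits positive definiteness and symmetry from $P$, and since scaling preserves \eqref{eq:M_lambda_decrease}, both conditions hold simultaneously for this $M$.

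I expect the main subtlety to lie in the interplay between the two conditions rather than in either one individually: each is routine in isolation (the first is classical Lyapunov theory for a shifted Hurwitz matrix, the second is a rescaling argument), but one must verify that the homogeneity of \eqref{eq:M_lambda_decrease} genuinely decouples the scaling from the decay-rate construction. Because multiplying $M$ by a positive constant leaves the matrix inequality \eqref{eq:M_lambda_decrease} invariant, the free scaling parameter $\alpha$ can be devoted entirely to satisfying \eqref{eq:M_bound} without disturbing the decrease condition, which is what makes the construction close cleanly.
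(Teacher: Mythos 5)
Your proof is correct: the shifted Lyapunov equation for $A_1+B_1K+\lambda I$ yields a positive definite $P$ satisfying \eqref{eq:M_lambda_decrease}, and the homogeneity of that inequality lets the free scaling $\alpha \geq \lambda_{\max}(C_1^TC_1)/\lambda_{\min}(P)$ be spent entirely on \eqref{eq:M_bound}. The paper itself imports this lemma from \citet{girard2007approximate} without proof, and your argument is the standard construction behind that cited result (consistent with the LMI/SDP formulation the paper gives immediately afterward), so there is nothing to flag.
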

Such an $M$ can be used to show exponential convergence of $\mathbf{y}_1$ to zero with rate $\lambda$ under the feedback $\mathbf{u}_1 = K\mathbf{x}_1$. Note that $M$ and $K$ can be computed jointly using semi-definite programming \cite{girard2009hierarchical}: letting $\bar{K} = KM^{-1}$ and $\bar{M} = M^{-1}$, we have the equivalent linear matrix inequality conditions
\begin{gather}
    \begin{bmatrix}
        \bar{M} & \bar{M}C_1^T \\
        C_1\bar{M} & I 
    \end{bmatrix} \geq 0, \\
    \bar{M}A_1^T + A_1\bar{M} + \bar{K}^TB_1^T + B_1\bar{K} \leq -2 \lambda \bar{M}.
\end{gather}

We can now state the following Theorem:

\begin{theorem}[\citet{girard2009hierarchical}]\label{theorem:simulation}
    Consider two systems of the form (\ref{eq:linear_systems}). Assume that $\Sigma_1$ is stabilizable with feedback gain $K$ and that there exist matrices $P \in \mathbb{R}^{n_1 \times n_2}$ and $Q \in \mathbb{R}^{m_1 \times n_2}$ such that the following conditions hold:
    \begin{align}
        PA_2 &= A_1P + B_1Q, \\
        C_2 & = C_1P.
    \end{align}
    Then a simulation function of $\Sigma_2$ by $\Sigma_1$ is given by
    \begin{equation}
        \mathcal{V}(\mathbf{x}_1,\mathbf{x}_2) = \sqrt{ (\mathbf{x}_1-P\mathbf{x}_2)^TM(\mathbf{x}_1-P\mathbf{x}_2}),
    \end{equation}
    an associated interface is
    \begin{equation}
        u_{\mathcal{V}} = R\mathbf{u}_2 + Q\mathbf{x}_2 + K(\mathbf{x}_1 - P\mathbf{x}_2),
    \end{equation}
    and the class-$\kappa$ function $\gamma$ is given by
    \begin{equation}\label{eq:gamma_fcn}
        \gamma(\nu) = \frac{\|\sqrt{M}(B_1R-PB_2)\|}{\lambda}\nu,
    \end{equation}
    where $R$ is an arbitrary matrix of proper dimensions and $M,\lambda$ satisfy (\ref{eq:M_bound}-\ref{eq:M_lambda_decrease})\footnote{Throughout this text, $\|A\|, A \in \mathbb{R}^{k\times l}$ refers to the induced 2-norm $\|A\| = \sup \{\|Ax\|_2 : x \in \mathbb{R}^l \text{ and } \|x\|_2=1\}$.}.
\end{theorem}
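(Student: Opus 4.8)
The plan is to verify the two defining conditions of a simulation function from Definition~\ref{definition:simulation_function} directly for the proposed quadratic form. Throughout I would work in the error coordinate $\mathbf{e} = \mathbf{x}_1 - P\mathbf{x}_2$, so that $\mathcal{V}(\mathbf{x}_1,\mathbf{x}_2) = \sqrt{\mathbf{e}^T M \mathbf{e}} = \|\sqrt{M}\,\mathbf{e}\|$, where $\sqrt{M}$ is the symmetric positive-definite square root of $M$. The observation that makes everything tractable is that the interface $u_\mathcal{V}$ is engineered precisely so that, in these coordinates, the closed-loop error obeys a linear system driven by $\mathbf{u}_2$; the quantity on the left-hand side of~(\ref{eq:simulation_fcn_decreasing}) is then nothing but $\dot{\mathcal{V}}$ evaluated along this closed-loop trajectory.

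First I would dispatch the output bound~(\ref{eq:output_error_bound}). Using the interconnection condition $C_2 = C_1 P$, the output error collapses to $C_1\mathbf{x}_1 - C_2\mathbf{x}_2 = C_1(\mathbf{x}_1 - P\mathbf{x}_2) = C_1\mathbf{e}$, and then~(\ref{eq:M_bound}) gives $\mathbf{e}^T M \mathbf{e} \geq \mathbf{e}^T C_1^T C_1 \mathbf{e} = \|C_1\mathbf{e}\|^2$, which is exactly~(\ref{eq:output_error_bound}). Next I would derive the error dynamics: substituting $\mathbf{u}_1 = u_\mathcal{V} = R\mathbf{u}_2 + Q\mathbf{x}_2 + K\mathbf{e}$ into $\dot{\mathbf{e}} = (A_1\mathbf{x}_1 + B_1\mathbf{u}_1) - P(A_2\mathbf{x}_2 + B_2\mathbf{u}_2)$ and eliminating the $\mathbf{x}_2$-dependent terms with the matching condition $PA_2 = A_1 P + B_1 Q$ should yield the clean form
\begin{equation*}
    \dot{\mathbf{e}} = (A_1 + B_1 K)\mathbf{e} + (B_1 R - P B_2)\mathbf{u}_2.
\end{equation*}
This is the crux of the argument: both matching conditions are consumed here, leaving error dynamics governed by the Hurwitz matrix $A_1 + B_1 K$ plus a $\mathbf{u}_2$-forcing term.

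With the error dynamics in hand, the final step is a standard Lyapunov estimate. Differentiating along trajectories gives $\dot{\mathcal{V}} = \mathbf{e}^T M \dot{\mathbf{e}} / \mathcal{V}$ whenever $\mathcal{V} > 0$; the non-smoothness at $\mathbf{e}=0$ causes no trouble because the decrease condition only needs to hold when $\gamma(\|\mathbf{u}_2\|) < \mathcal{V}$, which is vacuous at $\mathcal{V}=0$ since $\gamma$ is nonnegative. I would bound the two resulting terms separately: the quadratic term is controlled by~(\ref{eq:M_lambda_decrease}), giving $\mathbf{e}^T M(A_1+B_1K)\mathbf{e} \leq -\lambda\,\mathbf{e}^T M\mathbf{e} = -\lambda\mathcal{V}^2$, while the cross term is handled by Cauchy--Schwarz and the induced-norm definition, $\mathbf{e}^T M(B_1 R - P B_2)\mathbf{u}_2 \leq \|\sqrt{M}\,\mathbf{e}\|\,\|\sqrt{M}(B_1 R - P B_2)\|\,\|\mathbf{u}_2\| = \mathcal{V}\,\|\sqrt{M}(B_1 R - P B_2)\|\,\|\mathbf{u}_2\|$. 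Dividing by $\mathcal{V}$ yields $\dot{\mathcal{V}} \leq -\lambda\mathcal{V} + \|\sqrt{M}(B_1 R - P B_2)\|\,\|\mathbf{u}_2\|$. The definition of $\gamma$ in~(\ref{eq:gamma_fcn}) is calibrated exactly so that the hypothesis $\gamma(\|\mathbf{u}_2\|) < \mathcal{V}$ is equivalent to $\|\sqrt{M}(B_1 R - P B_2)\|\,\|\mathbf{u}_2\| < \lambda\mathcal{V}$, making the right-hand side strictly negative and establishing~(\ref{eq:simulation_fcn_decreasing}). I expect the main obstacle to be bookkeeping rather than conceptual: getting the cross-term bound tight enough that it cancels against $\lambda\mathcal{V}$ under precisely the stated $\gamma$, rather than merely up to a constant, is what pins down the specific form of~(\ref{eq:gamma_fcn}).
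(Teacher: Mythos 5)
Your proof is correct and follows the same route the paper itself takes for the analogous robust version (the proof of Theorem~\ref{theorem:simulation_bounded}): verify the output bound via $C_2 = C_1P$ and~(\ref{eq:M_bound}), reduce to the error dynamics $\dot{\mathbf{e}} = (A_1+B_1K)\mathbf{e} + (B_1R-PB_2)\mathbf{u}_2$ using the matching condition, and close with the Lyapunov estimate $\dot{\mathcal{V}} \leq -\lambda\mathcal{V} + \|\sqrt{M}(B_1R-PB_2)\|\,\|\mathbf{u}_2\|$, which is exactly the argument behind the cited result of Girard and Pappas. Your handling of the non-differentiability at $\mathbf{e}=0$ is a nice touch the paper glosses over; the only residual technicality (shared with the original) is that $\gamma$ is strictly increasing, hence class-$\kappa$, only when $B_1R \neq PB_2$.
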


The matrix $R$ acts as a ``feedforward'' mapping from $\mathbf{u}_2$ to $\mathbf{u}_1$. While the simulation relation holds for any $R$ of proper dimensions, choosing $R$ to minimize (\ref{eq:gamma_fcn}) is a logical choice, as this tightens the error bound (\ref{eq:error_bound}).

\section{Main Results}\label{sec:main_results}

\subsection{Robust Approximate Simulation}

Consider the systems 
\begin{equation}\label{eq:disturbance_system}
    \Sigma_1 :
    \begin{cases}
        \dot{\mathbf{x}}_1 = f_1(\mathbf{x}_1,\mathbf{u}_1, \mathbf{d}) \\
        \mathbf{y}_1 = g_1(\mathbf{x}_1)
    \end{cases},~~~
    \Sigma_2 :
    \begin{cases}
        \dot{\mathbf{x}}_2 = f_2(\mathbf{x}_2,\mathbf{u}_2) \\
        \mathbf{y}_2 = g_2(\mathbf{x}_2)
    \end{cases},
\end{equation}
where $\mathbf{x}_i \in \mathbb{R}^{n_i}$ are the system states, $\mathbf{u}_i \in \mathbb{R}^{p_i}$ are the control inputs, $\mathbf{y}_i \in \mathbb{R}^m$ are the system outputs, and $\mathbf{d} \in \mathbb{R}^q$ is an unknown disturbance. 

We define robust approximate simulation in terms of a Lyapunov-like robust simulation function as follows:
\begin{definition}[Robust Simulation Function]\label{definition:robust_simulation_function}
    Consider two systems of the form (\ref{eq:disturbance_system}). Let $\mathcal{V} : \mathbb{R}^{n_1} \times \mathbb{R}^{n_2} \to \mathbb{R}^+$ be a smooth function and $u_\mathcal{V} : \mathbb{R}^{p_2} \times \mathbb{R}^{n_1} \times \mathbb{R}^{n_2} \to \mathbb{R}^{p_1}$ be a continuous function. $\mathcal{V}$ is a robust simulation function of $\Sigma_2$ by $\Sigma_1$ and $u_\mathcal{V}$ is an associated interface if there exists class-$\kappa$ functions $\gamma_1$, $\gamma_2$ such that for all $\mathbf{x}_1, \mathbf{x}_2 \in \mathbb{R}^{n_1} \times \mathbb{R}^{n_2}$,
    \begin{equation}\label{eq:error_bounded_robust_as}
        \mathcal{V}(\mathbf{x}_1,\mathbf{x}_2) \geq \|g_1(\mathbf{x}_1)-g_2(\mathbf{x}_2)\|
    \end{equation}
    and for all $\mathbf{u}_2 \in \mathbb{R}^{p_2}$ satisfying $\gamma_1(\|\mathbf{d}\|) + \gamma_2(\|\mathbf{u}_2\|) < \mathcal{V}(\mathbf{x}_1, \mathbf{x}_2)$,
    \begin{equation}\label{eq:Vdot_decreasing_robust_as}
        \frac{\partial\mathcal{V}}{\partial\mathbf{x}_2}f_2(\mathbf{x}_2, \mathbf{u}_2) +\frac{\partial\mathcal{V}}{\partial\mathbf{x}_1}f_1(\mathbf{x}_1, u_{\mathcal{V}}(\mathbf{u}_2, \mathbf{x}_1, \mathbf{x}_2),\mathbf{d}) < 0.
    \end{equation}
\end{definition}
\begin{definition}[Robust Approximate Simulation]
    $\Sigma_1$ robustly approximately simulates $\Sigma_2$ if there exists a robust simulation function $\mathcal{V}$ of $\Sigma_2$ by $\Sigma_1$.
\end{definition}

Note that Definition \ref{definition:robust_simulation_function} is a direct generalization of the typical approximate simulation notion: taking $\mathbf{d} = \bm{0}$, we recover System (\ref{eq:nonlinear_system}) and Definition \ref{definition:simulation_function}. The primary difference between robust approximate simulation and traditional approximate simulation is the conditions under which the simulation function decreases along a trajectory. This suggests that for many cases, a simulation function may also be a robust simulation function, though the resulting error bounds would be different. 

As in the case of conventional approximate simulation, finding a robust simulation function and an interface for two general systems is a difficult problem. In the following subsections, we consider the special cases of linear systems under bounded and impulse disturbances. For each case, we show that the conventional approximate simulation function is also a robust simulation function, and derive the associated error bounds. 

\subsection{Linear Systems under Bounded Disturbances}

Consider the following special case of System (\ref{eq:disturbance_system}):
\begin{align}
    \Sigma_1 &:
    \begin{cases}
        \dot{\mathbf{x}}_1 = A_1\mathbf{x}_1 + B_1\mathbf{u}_1 + B_d \mathbf{d}  \\
        \mathbf{y}_1 = C_1\mathbf{x}_1
    \end{cases}, \nonumber\\
    \Sigma_2 &:
    \begin{cases}
        \dot{\mathbf{x}}_2 = A_2\mathbf{x}_2 + B_2\mathbf{u}_2 \\
        \mathbf{y}_2 = C_2\mathbf{x}_2
    \end{cases}, \label{eq:linear_systems_bounded_disturbance}
\end{align}
where $A_i$, $B_i$, $C_i$, and $B_d$ are matrices of proper dimension, and $\mathbf{d}(t) \in \mathbb{R}^q$ is an external disturbance signal that is unknown but bounded in the sense that $\|\mathbf{d}\|_{\infty} < d_{max}$. We assume that $B_d$, the mapping from disturbances to the system, is known.

Taking inspiration from \cite[Theorem 2]{girard2009hierarchical}, we can establish the following analogue to Theorem \ref{theorem:simulation}:

\begin{theorem}\label{theorem:simulation_bounded}
    Consider two systems of the form (\ref{eq:linear_systems_bounded_disturbance}). Assume that $\Sigma_1$ is stabilizable with feedback gain $K$ and that there exist matrices $P$ and $Q$ such that the following conditions hold:
    \begin{align}\label{eq:PQ_condition_simulation_bounded}
        PA_2 &= A_1P + B_1Q, \\
        C_2 & = C_1P.\label{eq:PQ_output_condition}
    \end{align}
    Then a robust simulation function of $\Sigma_2$ by $\Sigma_1$ is given by
    \begin{equation}
        \mathcal{V}(\mathbf{x}_1,\mathbf{x}_2) = \sqrt{ (\mathbf{x}_1-P\mathbf{x}_2)^TM(\mathbf{x}_1-P\mathbf{x}_2}),
    \end{equation}
    an associated interface is
    \begin{equation}
        u_{\mathcal{V}} = R\mathbf{u}_2 + Q\mathbf{x}_2 + K(\mathbf{x}_1 - P\mathbf{x}_2),
    \end{equation}
    the class-$\kappa$ function $\gamma_1$ is given by
    \begin{equation}
        \gamma_1(\nu) = \frac{\|\sqrt{M}B_d\|}{\lambda}\nu,
    \end{equation}
    and the class-$\kappa$ function $\gamma_2$ is given by
    \begin{equation}
        \gamma_2(\nu) = \frac{\|\sqrt{M}(B_1R-PB_2)\|}{\lambda}\nu,
    \end{equation}
    where $R$ is an arbitrary matrix of proper dimensions and $M,\lambda$ are such that (\ref{eq:M_bound}-\ref{eq:M_lambda_decrease}) hold. 
\end{theorem}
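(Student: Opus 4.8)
The plan is to verify the two conditions of Definition~\ref{definition:robust_simulation_function} directly, reusing as much of the structure from Theorem~\ref{theorem:simulation} as possible. The output bound~(\ref{eq:error_bounded_robust_as}) is unchanged from the disturbance-free case, so I would dispatch it first: defining the error coordinate $\mathbf{e} = \mathbf{x}_1 - P\mathbf{x}_2$, I would compute $\|g_1(\mathbf{x}_1) - g_2(\mathbf{x}_2)\| = \|C_1\mathbf{x}_1 - C_2\mathbf{x}_2\|$, use the matching condition $C_2 = C_1 P$ from~(\ref{eq:PQ_output_condition}) to rewrite this as $\|C_1\mathbf{e}\| = \sqrt{\mathbf{e}^T C_1^T C_1 \mathbf{e}}$, and then invoke the inequality $M \geq C_1^T C_1$ from~(\ref{eq:M_bound}) to conclude that this is bounded above by $\sqrt{\mathbf{e}^T M \mathbf{e}} = \mathcal{V}(\mathbf{x}_1,\mathbf{x}_2)$.

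The substance of the proof is condition~(\ref{eq:Vdot_decreasing_robust_as}). The key step is to derive the dynamics of the error coordinate $\mathbf{e}$ under the proposed interface. Substituting $\mathbf{u}_1 = u_{\mathcal{V}} = R\mathbf{u}_2 + Q\mathbf{x}_2 + K\mathbf{e}$ into $\dot{\mathbf{x}}_1 = A_1\mathbf{x}_1 + B_1\mathbf{u}_1 + B_d\mathbf{d}$ and subtracting $P\dot{\mathbf{x}}_2 = P(A_2\mathbf{x}_2 + B_2\mathbf{u}_2)$, I expect the condition $PA_2 = A_1 P + B_1 Q$ from~(\ref{eq:PQ_condition_simulation_bounded}) to cancel the $\mathbf{x}_2$ terms exactly, leaving
\begin{equation*}
    \dot{\mathbf{e}} = (A_1 + B_1 K)\mathbf{e} + (B_1 R - P B_2)\mathbf{u}_2 + B_d \mathbf{d}.
\end{equation*}
This is the same error dynamics as in the undisturbed case, but with the extra forcing term $B_d\mathbf{d}$. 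I would then differentiate $\mathcal{V} = \sqrt{\mathbf{e}^T M \mathbf{e}}$ along trajectories, obtaining $\dot{\mathcal{V}} = \frac{\mathbf{e}^T M \dot{\mathbf{e}}}{\sqrt{\mathbf{e}^T M \mathbf{e}}}$, and substitute the error dynamics. Using the Lyapunov inequality~(\ref{eq:M_lambda_decrease}) on the $(A_1+B_1K)\mathbf{e}$ term yields a contribution at most $-\lambda \mathcal{V}$, while the two forcing terms are bounded by Cauchy--Schwarz in the $M$-inner product as $\frac{\mathbf{e}^T M (B_1R - PB_2)\mathbf{u}_2}{\sqrt{\mathbf{e}^T M \mathbf{e}}} \leq \|\sqrt{M}(B_1R - PB_2)\|\,\|\mathbf{u}_2\|$ and similarly for the $B_d\mathbf{d}$ term.

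Collecting these gives the differential inequality
\begin{equation*}
    \dot{\mathcal{V}} \leq -\lambda \mathcal{V} + \|\sqrt{M}(B_1R - PB_2)\|\,\|\mathbf{u}_2\| + \|\sqrt{M}B_d\|\,\|\mathbf{d}\|,
\end{equation*}
and I would conclude by observing that whenever $\gamma_1(\|\mathbf{d}\|) + \gamma_2(\|\mathbf{u}_2\|) < \mathcal{V}$, the right-hand side is strictly negative: substituting the stated forms of $\gamma_1$ and $\gamma_2$ makes the forcing terms equal to $\lambda(\gamma_1(\|\mathbf{d}\|) + \gamma_2(\|\mathbf{u}_2\|)) < \lambda\mathcal{V}$, which precisely cancels against the $-\lambda\mathcal{V}$ decay term to leave $\dot{\mathcal{V}} < 0$. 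The main obstacle I anticipate is purely technical rather than conceptual: the $M$-weighted Cauchy--Schwarz estimate requires care to express in terms of the induced matrix norm $\|\sqrt{M}(\cdot)\|$, and differentiating the square root in $\mathcal{V}$ is only valid away from $\mathbf{e} = 0$, so I would note that the decrease condition is only tested where $\mathcal{V} > 0$. Beyond that, the disturbance term enters entirely in parallel to the existing $\mathbf{u}_2$ feedforward term, so the derivation mirrors the proof of Theorem~\ref{theorem:simulation} with $B_d\mathbf{d}$ playing the same role as $(B_1R - PB_2)\mathbf{u}_2$.
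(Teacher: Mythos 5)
Your proposal is correct and follows essentially the same route as the paper's proof: verify the output bound via $M \geq C_1^TC_1$ and $C_2 = C_1P$, then derive the error dynamics $\dot{\mathbf{e}} = (A_1+B_1K)\mathbf{e} + (B_1R-PB_2)\mathbf{u}_2 + B_d\mathbf{d}$ and bound $\dot{\mathcal{V}} \leq -\lambda\mathcal{V} + \|\sqrt{M}B_d\|\|\mathbf{d}\| + \|\sqrt{M}(B_1R-PB_2)\|\|\mathbf{u}_2\|$, treating the disturbance in parallel with the feedforward term. Your write-up is actually more explicit than the paper's (which compresses the error-dynamics computation into one line and contains a sign typo, $B_1R+PB_2$, in its intermediate displays that your version correctly avoids).
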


\begin{proof}
    From (\ref{eq:M_bound}) and (\ref{eq:PQ_output_condition}) we have
    \begin{multline*}
        \mathcal{V}(\mathbf{x}_1,\mathbf{x}_2) \geq \sqrt{(\mathbf{x}_1-P\mathbf{x}_2)^TC_1^TC_1(\mathbf{x}_1-P\mathbf{x}_2}) \\
        = \|C_1\mathbf{x}_1-C_2\mathbf{x}_2\|,
    \end{multline*}
    so the output error bound condition (\ref{eq:error_bounded_robust_as}) holds. Next we consider the decay rate condition (\ref{eq:Vdot_decreasing_robust_as}). Using equations (\ref{eq:PQ_condition_simulation_bounded}) and (\ref{eq:M_lambda_decrease}), we can show that
    \begin{multline*}
        \frac{\partial\mathcal{V}}{\partial\mathbf{x}_2}(A_2\mathbf{x}_2+B_2\mathbf{u}_2) +\frac{\partial\mathcal{V}}{\partial\mathbf{x}_1}(A_1\mathbf{x}_1+B_1u_\mathcal{V}+B_d\mathbf{d})\\
        \leq -\lambda\mathcal{V}(\mathbf{x}_1,\mathbf{x}_2) + \|\sqrt{M}B_d\mathbf{d} + \sqrt{M}(B_1R+PB_2)\mathbf{u}_2\| \\
        \leq -\lambda\mathcal{V}(\mathbf{x}_1,\mathbf{x}_2) + \|\sqrt{M}B_d\|\|\mathbf{d}\| + \|\sqrt{M}(B_1R+PB_2)\|\|\mathbf{u}_2\|.
    \end{multline*}
    From this it is clear that as long as
    \begin{equation*}
        \frac{\|\sqrt{M}B_d\|}{\lambda}\|\mathbf{d}\| + \frac{\|\sqrt{M}(B_1R-PB_2)\|}{\lambda}\|\mathbf{u}_2\| < \mathcal{V}(\mathbf{x}_1,\mathbf{x}_2),
    \end{equation*}
    we have
    \begin{equation*}
        \frac{\partial\mathcal{V}}{\partial\mathbf{x}_2}(A_2\mathbf{x}_2+B_2\mathbf{u}_2) +\frac{\partial\mathcal{V}}{\partial\mathbf{x}_1}(A_1\mathbf{x}_1+B_1u_\mathcal{V}+B_d\mathbf{d}) < 0,
    \end{equation*}
    and so the Theorem holds.
\end{proof}

Furthermore, we can establish a modified error bound for this case of bounded disturbances:

\begin{theorem}\label{theorem:bound_bounded}
    Consider two systems of the form (\ref{eq:linear_systems_bounded_disturbance}). Let $\mathcal{V}$ be a robust simulation function of $\Sigma_2$ by $\Sigma_1$ and $u_\mathcal{V}$ be an associated interface. Let $\mathbf{u}_2(t)$ be an admissible input of $\Sigma_2$ with associated state and output trajectories $\mathbf{x}_2(t)$ and $\mathbf{y}_2(t)$. Let $\mathbf{x}_1(t)$ be a state trajectory of $\Sigma_1$ satisfying
    \begin{equation*}
        \dot{\mathbf{x}}_1 = A_1\mathbf{x}_1 + B_1u_\mathcal{V} + B_d\mathbf{d}
    \end{equation*}
    and $\mathbf{y}_1(t)$ be the associated output trajectory. Then 
    \begin{multline}\label{eq:error_bound_bounded}
        \|\mathbf{y}_1(t)-\mathbf{y}_2(t)\| \leq \\
        \max\big\{ \mathcal{V}(\mathbf{x}_1(0),\mathbf{x}_2(0)), \gamma_1(\|\mathbf{d}\|_\infty)+ \gamma_2(\|\mathbf{u}_2\|_\infty)\big\}.
    \end{multline}
\end{theorem}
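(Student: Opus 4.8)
The plan is to reduce the output-error bound to a bound on the robust simulation function itself, and then to propagate that bound along the trajectory using the decay condition (\ref{eq:Vdot_decreasing_robust_as}). Since (\ref{eq:error_bounded_robust_as}) already gives $\|\mathbf{y}_1(t)-\mathbf{y}_2(t)\| = \|g_1(\mathbf{x}_1(t))-g_2(\mathbf{x}_2(t))\| \leq \mathcal{V}(\mathbf{x}_1(t),\mathbf{x}_2(t))$ for every $t$, it suffices to show that
\begin{equation*}
\mathcal{V}(\mathbf{x}_1(t),\mathbf{x}_2(t)) \leq \max\big\{\mathcal{V}(\mathbf{x}_1(0),\mathbf{x}_2(0)),\, \gamma_1(\|\mathbf{d}\|_\infty)+\gamma_2(\|\mathbf{u}_2\|_\infty)\big\}
\end{equation*}
holds for all $t \geq 0$. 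This is exactly the structure of the proof of Theorem~\ref{theorem:bound}, so I would follow that argument with the single gain $\gamma(\|\mathbf{u}_2\|_\infty)$ replaced by the combined gain $\gamma_1(\|\mathbf{d}\|_\infty)+\gamma_2(\|\mathbf{u}_2\|_\infty)$.

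Writing $m(t) := \mathcal{V}(\mathbf{x}_1(t),\mathbf{x}_2(t))$ and $\beta := \gamma_1(\|\mathbf{d}\|_\infty)+\gamma_2(\|\mathbf{u}_2\|_\infty)$, the key observation is that because $\gamma_1$ and $\gamma_2$ are class-$\kappa$, hence nondecreasing, the pointwise quantities satisfy $\gamma_1(\|\mathbf{d}(t)\|) \leq \gamma_1(\|\mathbf{d}\|_\infty)$ and $\gamma_2(\|\mathbf{u}_2(t)\|) \leq \gamma_2(\|\mathbf{u}_2\|_\infty)$ at every time. Consequently, at any instant where $m(t) > \beta$ we also have $m(t) > \gamma_1(\|\mathbf{d}(t)\|)+\gamma_2(\|\mathbf{u}_2(t)\|)$, so the hypothesis of (\ref{eq:Vdot_decreasing_robust_as}) is met and the total derivative $\dot{m}(t) = \frac{\partial\mathcal{V}}{\partial\mathbf{x}_1}\dot{\mathbf{x}}_1 + \frac{\partial\mathcal{V}}{\partial\mathbf{x}_2}\dot{\mathbf{x}}_2$ is strictly negative along the closed-loop trajectory $\dot{\mathbf{x}}_1 = A_1\mathbf{x}_1+B_1u_\mathcal{V}+B_d\mathbf{d}$, $\dot{\mathbf{x}}_2 = A_2\mathbf{x}_2+B_2\mathbf{u}_2$.

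From here I would split into two cases. If $m(0) \leq \beta$, I would argue that the sublevel set $\{m \leq \beta\}$ is forward invariant: whenever $m$ reaches $\beta$ it cannot cross past it, since just above $\beta$ the derivative is strictly negative. If instead $m(0) > \beta$, then $m$ is strictly decreasing as long as it exceeds $\beta$, so it is bounded above by $m(0)$ and monotonically approaches $\{m \leq \beta\}$; once inside, the invariance of the previous case keeps it there. In both cases $m(t) \leq \max\{m(0),\beta\}$, which combined with (\ref{eq:error_bounded_robust_as}) yields (\ref{eq:error_bound_bounded}).

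The main obstacle I anticipate is making the invariance argument rigorous at the boundary $m = \beta$, where (\ref{eq:Vdot_decreasing_robust_as}) supplies a strict decrease only for $m > \beta$ and says nothing at equality. The clean way around this is a standard comparison/continuity argument (as in input-to-state-stability proofs): suppose toward contradiction that $m(t^*) > \max\{m(0),\beta\}$ for some $t^*$, take the last time $t_0 < t^*$ at which $m(t_0) = \max\{m(0),\beta\} \geq \beta$, and derive a contradiction from $\dot{m} < 0$ on the interval $(t_0,t^*)$. A secondary technical point is that $\mathcal{V}$, being the square root of a quadratic form, is nonsmooth at $\mathbf{x}_1 = P\mathbf{x}_2$; this is harmless here because the region of interest $m \geq \beta$ stays bounded away from that point whenever $\beta > 0$, and the degenerate case $\beta = 0$ collapses to the disturbance-free setting of Theorem~\ref{theorem:bound}.
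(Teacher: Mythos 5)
Your proposal is correct and follows essentially the same route as the paper: reduce to bounding $\mathcal{V}(\mathbf{x}_1(t),\mathbf{x}_2(t))$ by $\max\{\mathcal{V}(0),\gamma_1(\|\mathbf{d}\|_\infty)+\gamma_2(\|\mathbf{u}_2\|_\infty)\}$ via the monotonicity of the class-$\kappa$ gains, and establish that bound by the same contradiction argument at the last crossing time (the paper likewise adapts the proof of Theorem~\ref{theorem:bound} with the single gain replaced by the combined gain). Your added remarks on the boundary case and the nonsmoothness of $\mathcal{V}$ at $\mathbf{x}_1 = P\mathbf{x}_2$ are reasonable refinements the paper does not discuss, but they do not change the argument.
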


\begin{proof}
    This proof follows closely from the proof of \cite[Theorem 1]{girard2009hierarchical}. To simplify notation, we will denote $\mathcal{V}(\mathbf{x}_1(t),\mathbf{x}_2(t))$ as $\mathcal{V}(t)$. Let $\epsilon = \max\big\{ \mathcal{V}(0), \gamma_1(\|\mathbf{d}\|_\infty) + \gamma_2(\|\mathbf{u}_2\|_\infty)\big\}$. We will show that $\mathcal{V}(t) \leq \epsilon ~~\forall t$. First, note that $\mathcal{V}(0) \leq \epsilon$ trivially. Now assume that there exists $\tau > 0$ such that $\mathcal{V}(\tau) > \epsilon$. Then there also exists some $0 \leq \tau' < \tau$ such that $\mathcal{V}(\tau') = \epsilon$ and $\forall t \in (\tau',\tau], \mathcal{V}(t) > \epsilon$. Note that $\forall t \in (\tau',\tau]$, we have 
    \begin{equation*}
        \gamma_1(\|d\|) + \gamma_2(\|u_2\|) \leq \gamma_1(\|d\|_\infty) + \gamma_2(\|u_2\|_\infty) \leq \epsilon < \mathcal{V}(t).
    \end{equation*}
    From (\ref{eq:Vdot_decreasing_robust_as}), we then have that $\forall t \in (\tau',\tau]$,
    \begin{equation*}
        \frac{d\mathcal{V}(t)}{dt} \leq 0
    \end{equation*}
    which implies
    \begin{equation*}
        \mathcal{V}(\tau)-\mathcal{V}(\tau') = \int_{\tau'}^\tau\frac{d\mathcal{V}(t)}{dt}dt < 0.
    \end{equation*}
    But this contradicts $\mathcal{V}(\tau) > \epsilon = \mathcal{V}(\tau')$. Therefore we must have $\mathcal{V}(t) \leq \epsilon ~~\forall t$. 
    All that is left is to note that $\mathcal{V}(\mathbf{x}_1(t),\mathbf{x}_2(t)) \leq \epsilon \implies \|\mathbf{y}_1(t)-\mathbf{y}_2(t)\| \leq \epsilon$ by Equation~(\ref{eq:error_bounded_robust_as}).
\end{proof}

\subsection{Linear Systems under Impulse Disturbances}\label{sec:problem_form}

Here we consider the case of linear systems under unbounded disturbances that take the form of impulses. We are inspired to consider this type of disturbance model by recent research applying approximate simulation to legged robot locomotion \cite{kurtz2019formal}. In legged locomotion, disturbances from foot impacts with the ground result in infinite-magnitude disturbances over infinitesimally small time periods, and thus cannot be described by the bounded disturbance model described above. 

Consider the following special case of System (\ref{eq:disturbance_system}):
\begin{align}
    \Sigma_1 &:
    \begin{cases}
        \dot{\mathbf{x}}_1 = A_1\mathbf{x}_1 + B_1\mathbf{u}_1 + B_d d  \\
        \mathbf{y}_1 = C_1\mathbf{x}_1
    \end{cases},\nonumber\\
    \Sigma_2 &:
    \begin{cases}
        \dot{\mathbf{x}}_2 = A_2\mathbf{x}_2 + B_2\mathbf{u}_2 \\
        \mathbf{y}_2 = C_2\mathbf{x}_2
    \end{cases}, \label{eq:linear_systems_impulse_disturbance}
\end{align}
where $d(t)$ is a sequence of unit impulses at times $\mathbb{T}$, i.e.,
\begin{equation}
    d(t) = \sum_{t_{i} \in \mathbb{T}} \delta(t-t_{i}),
\end{equation}
where $\delta(t)$ is the Dirac delta function. $B_d(t) \in \mathbb{R}^n$ is a mapping from the impulse disturbances $d(t)$ to the system, which we assume is unknown and possibly time varying, but bounded in the sense that $\|B_d\| < b_{max}$. We make the further assumption that the impulses $t_{i} \in \mathbb{T}$ are separated by at least a minimum dwell time $t_{dwell} > 0$. 

\begin{remark}
    Even though the disturbances we consider with this system model are unbounded, Definition \ref{definition:robust_simulation_function} is flexible enough to handle impulse disturbances. This is because at the instant of an impulse, $d(t)$ is infinite magnitude, and therefore the condition (\ref{eq:Vdot_decreasing_robust_as}) is not enforced. 
\end{remark}

First, we establish a straightforward analogue to Theorem \ref{theorem:simulation} for the the case of the impulse sequence model.

\begin{theorem}\label{theorem:simulation_impulse}
    Consider two systems of the form (\ref{eq:linear_systems_impulse_disturbance}). Assume that $\Sigma_1$ is stabilizable with feedback gain $K$ and that there exist matrices $P$ and $Q$ such that the following conditions hold:
    \begin{align}
        PA_2 &= A_1P + B_1Q, \\
        C_2 & = C_1P.
    \end{align}
    Then a robust simulation function of $\Sigma_2$ by $\Sigma_1$ is given by
    \begin{equation}
        \mathcal{V}(\mathbf{x}_1,\mathbf{x}_2) = \sqrt{ (\mathbf{x}_1-P\mathbf{x}_2)^TM(\mathbf{x}_1-P\mathbf{x}_2}),
    \end{equation}
    an associated interface is
    \begin{equation}
        u_{\mathcal{V}} = R\mathbf{u}_2 + Q\mathbf{x}_2 + K(\mathbf{x}_1 - P\mathbf{x}_2),
    \end{equation}
    the class-$\kappa$ function $\gamma_1$ is given by 
    \begin{equation}
        \gamma_1(\nu) = \nu
    \end{equation}
    and a class-$\kappa$ function $\gamma_2$ is given by
    \begin{equation}
        \gamma_2(\nu) = \frac{\|\sqrt{M}(B_1R-PB_2)\|}{\lambda}\nu,
    \end{equation}
    where $R$ is an arbitrary matrix of proper dimensions and $M,\lambda$ are such that (\ref{eq:M_bound}-\ref{eq:M_lambda_decrease}) hold.
\end{theorem}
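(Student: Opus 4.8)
The plan is to verify the two conditions of Definition~\ref{definition:robust_simulation_function} for the proposed $\mathcal{V}$, $u_\mathcal{V}$, $\gamma_1$, and $\gamma_2$, paralleling the proof of Theorem~\ref{theorem:simulation_bounded}. The output-error bound (\ref{eq:error_bounded_robust_as}) is established exactly as in the bounded-disturbance case: substituting $M \geq C_1^TC_1$ from (\ref{eq:M_bound}) and $C_2 = C_1P$ into the definition of $\mathcal{V}$ yields $\mathcal{V}(\mathbf{x}_1,\mathbf{x}_2) \geq \|C_1\mathbf{x}_1 - C_2\mathbf{x}_2\| = \|\mathbf{y}_1 - \mathbf{y}_2\|$, which is independent of the disturbance. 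The substance of the proof therefore lies entirely in the decay condition (\ref{eq:Vdot_decreasing_robust_as}).

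For the decay condition, I would set $e = \mathbf{x}_1 - P\mathbf{x}_2$ and compute $\dot{e}$ along the closed-loop dynamics under the interface $u_\mathcal{V} = R\mathbf{u}_2 + Q\mathbf{x}_2 + Ke$. Using $PA_2 = A_1P + B_1Q$ to cancel the $\mathbf{x}_2$ terms, this collapses to $\dot{e} = (A_1+B_1K)e + (B_1R-PB_2)\mathbf{u}_2 + B_d d$, i.e. the same error dynamics as the bounded case but with $B_d d$ now an impulse train. Differentiating $\mathcal{V} = \sqrt{e^TMe}$, applying (\ref{eq:M_lambda_decrease}), and using Cauchy--Schwarz on $e^TM(\cdot) = (\sqrt{M}e)^T\sqrt{M}(\cdot)$ with $\|\sqrt{M}e\| = \mathcal{V}$ yields $\dot{\mathcal{V}} \leq -\lambda\mathcal{V} + \|\sqrt{M}(B_1R-PB_2)\|\|\mathbf{u}_2\| + \|\sqrt{M}B_d\|\|d\|$ wherever $\mathcal{V} > 0$.

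The crux is then a case split on the disturbance, which in the impulse model takes only the values $\|d\| = 0$ (between impulses) and $\|d\| = \infty$ (at the impulse instants $t_i \in \mathbb{T}$). Between impulses the $B_d d$ term vanishes and the bound reduces to $\dot{\mathcal{V}} \leq -\lambda\mathcal{V} + \|\sqrt{M}(B_1R-PB_2)\|\|\mathbf{u}_2\|$, so $\dot{\mathcal{V}} < 0$ precisely when $\gamma_2(\|\mathbf{u}_2\|) < \mathcal{V}$; since $\gamma_1(\|d\|) = \gamma_1(0) = 0$ there, the hypothesis $\gamma_1(\|d\|) + \gamma_2(\|\mathbf{u}_2\|) < \mathcal{V}$ of (\ref{eq:Vdot_decreasing_robust_as}) supplies exactly this. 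At an impulse instant, $\|d\| = \infty$ gives $\gamma_1(\|d\|) = \infty$, so the hypothesis $\gamma_1(\|d\|) + \gamma_2(\|\mathbf{u}_2\|) < \mathcal{V}$ can never be met and the decay requirement is vacuously satisfied.

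The main obstacle I anticipate is not any computation---the error dynamics and the differentiation of $\mathcal{V}$ are identical to the previous theorem---but rather arguing cleanly that the impulse instants are discharged by vacuity. The essential observation, already flagged in the preceding remark, is that Definition~\ref{definition:robust_simulation_function} only demands $\dot{\mathcal{V}} < 0$ when the gain-weighted disturbance and input are dominated by $\mathcal{V}$; because an impulse has infinite magnitude, this premise fails at $t_i$ and no decay is required there. This is precisely why any unbounded class-$\kappa$ $\gamma_1$ works and the simplest choice $\gamma_1(\nu) = \nu$ suffices. It is worth emphasizing in the write-up that the jumps in $\mathcal{V}$ produced by the impulses are left entirely to the accompanying error-bound analysis; Theorem~\ref{theorem:simulation_impulse} itself only certifies the robust simulation function structure.
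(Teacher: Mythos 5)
Your proposal is correct and follows essentially the same route as the paper's proof: the output-error bound holds independently of the disturbance, the decay condition between impulses reduces to the standard (disturbance-free) argument of Theorem~\ref{theorem:simulation}, and at impulse instants the premise $\gamma_1(\|d\|)+\gamma_2(\|\mathbf{u}_2\|) < \mathcal{V}$ fails because $\gamma_1(\infty)=\infty$, so the condition is vacuous. The only difference is that you spell out the error dynamics and the derivation of $\dot{\mathcal{V}} \leq -\lambda\mathcal{V} + \ldots$ explicitly, whereas the paper simply cites the earlier theorem for those steps.
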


\begin{proof}
     Regardless of the disturbance $d$, the condition (\ref{eq:error_bounded_robust_as}) holds trivially following Theorem \ref{theorem:simulation}. For those times when there is not an impulse, i.e., $t \notin \mathbb{T}$, $d(t) = 0 \implies \gamma_1(\|d\|) = 0$ and the second condition (\ref{eq:Vdot_decreasing_robust_as}) also holds following Theorem \ref{theorem:simulation}. For those times when there is an impulse, i.e., $t \in \mathbb{T}$, we have $d(t) = \infty \implies \gamma_1(\|d\|) + \gamma_2(\|\mathbf{u}_2\|) > \mathcal{V}(\mathbf{x}_1,\mathbf{x}_2)$ and so condition (\ref{eq:Vdot_decreasing_robust_as}) is not enforced. With this in mind, note that \textit{any} $\gamma_1(\cdot)$ in class-$\kappa$ is suitable for enforcing robust approximate simulation in this case.
\end{proof}

We can now establish an error bound analogous to that of Theorem \ref{theorem:bound}. To do so, first consider the case of a single impulse disturbance at time $t_i$. For this case, we establish a relaxed upper bound on the output error as follows:

\begin{proposition}\label{proposition:single_impulse}
    Consider two systems of the form (\ref{eq:linear_systems_impulse_disturbance}). Let $\mathcal{V}$ be a robust simulation function of $\Sigma_2$ by $\Sigma_1$ and $u_\mathcal{V}$ be an associated interface. Let $\mathbf{u}_2(t)$ be a smooth admissible input of $\Sigma_2$ with associated state and output trajectories $\mathbf{x}_2(t)$ and $\mathbf{y}_2(t)$. Let $\mathbf{x}_1(t)$ be a state trajectory of $\Sigma_1$ satisfying
    \begin{equation*}
         \dot{\mathbf{x}}_1 = A_1\mathbf{x}_1 + B_1u_\mathcal{V} + B_dd
    \end{equation*}
    and $\mathbf{y}_1(t)$ be the associated output trajectory. Assume that the disturbance signal $d(t)= \delta(t-t_i)$ is a single impulse at time $t_i$. Then 
    \begin{multline}\label{eq:error_bound_single_impulse}
        \|\mathbf{y}_1(t)-\mathbf{y}_2(t)\| \leq \\
        \max\big\{ \mathcal{V}(\mathbf{x}_1(0),\mathbf{x}_2(0)), \gamma(\|\mathbf{u}_2\|_\infty)\big\} +  b_{max}\sqrt{\lambda_{max}},
    \end{multline}
    where $\lambda_{max}$ is the maximum eigenvalue of $M$ and $b_{max}$ is the upper bound on $\|B_d\|$.
\end{proposition}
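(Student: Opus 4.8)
The plan is to split the trajectory at the single impulse time $t_i$ into three regimes: the pre-impulse interval $[0,t_i)$, the instantaneous jump at $t_i$, and the post-impulse interval $(t_i,\infty)$. The guiding observation is that away from $t_i$ the disturbance vanishes, so on each open interval the concrete system evolves exactly as in the undisturbed case and the machinery of Theorem~\ref{theorem:bound} applies verbatim. The only genuinely new ingredient is the effect of the impulse itself, which manifests not as a violation of the decrease condition over an interval but as a single bounded jump in the concrete state $\mathbf{x}_1$.

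First I would treat the pre-impulse phase. For $t\in[0,t_i)$ we have $d(t)=0$, so $\Sigma_1$ is driven only by $u_\mathcal{V}$ and Theorem~\ref{theorem:bound} (restricted to $[0,t_i)$) gives $\mathcal{V}(t)\leq \epsilon_0 := \max\{\mathcal{V}(\mathbf{x}_1(0),\mathbf{x}_2(0)),\gamma(\|\mathbf{u}_2\|_\infty)\}$, the nominal term appearing in~(\ref{eq:error_bound_single_impulse}). Next I would quantify the jump. Integrating $\dot{\mathbf{x}}_1 = A_1\mathbf{x}_1 + B_1 u_\mathcal{V} + B_d\,\delta(t-t_i)$ across $t_i$, the absolutely continuous terms contribute nothing over an infinitesimal interval, leaving $\mathbf{x}_1(t_i^+) = \mathbf{x}_1(t_i^-) + B_d$, while $\mathbf{x}_2$ remains continuous since $\Sigma_2$ is undisturbed. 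Writing $\mathbf{e} = \mathbf{x}_1 - P\mathbf{x}_2$, so that $\mathcal{V} = \|\sqrt{M}\,\mathbf{e}\|$, the error vector simply jumps by $B_d$, i.e. $\mathbf{e}(t_i^+) = \mathbf{e}(t_i^-) + B_d$.

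The crucial step is to bound $\mathcal{V}(t_i^+)$, and here I would exploit that $\mathcal{V}$ is a genuine (weighted Euclidean) norm. The triangle inequality gives $\mathcal{V}(t_i^+) = \|\sqrt{M}(\mathbf{e}(t_i^-)+B_d)\| \leq \mathcal{V}(t_i^-) + \|\sqrt{M}B_d\|$. Since $M$ is symmetric positive definite, its symmetric square root has induced $2$-norm $\|\sqrt{M}\| = \sqrt{\lambda_{max}}$, whence $\|\sqrt{M}B_d\| \leq \sqrt{\lambda_{max}}\,\|B_d\| \leq b_{max}\sqrt{\lambda_{max}}$. Combined with the pre-impulse bound $\mathcal{V}(t_i^-)\leq\epsilon_0$, this yields $\mathcal{V}(t_i^+) \leq \epsilon_0 + b_{max}\sqrt{\lambda_{max}}$.

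Finally, on the post-impulse phase $t>t_i$ there are no further impulses, so condition~(\ref{eq:Vdot_decreasing_robust_as}) holds whenever $\mathcal{V}(t) > \gamma(\|\mathbf{u}_2\|_\infty)$. Re-running the sub-level-set argument of Theorem~\ref{theorem:bound_bounded}, but initialized at $t_i^+$ instead of $0$, shows $\mathcal{V}(t)\leq \max\{\mathcal{V}(t_i^+),\gamma(\|\mathbf{u}_2\|_\infty)\} = \mathcal{V}(t_i^+)$, so $\mathcal{V}(t)\leq \epsilon_0 + b_{max}\sqrt{\lambda_{max}}$ for all $t$; applying the output bound~(\ref{eq:error_bounded_robust_as}) then gives~(\ref{eq:error_bound_single_impulse}). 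I expect the jump step to be the main obstacle: one must correctly recognize that the impulse produces a \emph{finite}, norm-bounded state displacement $B_d$ rather than an unbounded one, and then use the norm structure of $\mathcal{V}$ so that the triangle inequality cleanly separates the additive disturbance penalty $b_{max}\sqrt{\lambda_{max}}$ from the nominal error $\epsilon_0$.
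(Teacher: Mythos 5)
Your proposal is correct and follows essentially the same route as the paper's proof: bound $\mathcal{V}$ on $[0,t_i)$ via the undisturbed error bound, absorb the impulse as a finite state jump $\mathbf{x}_1(t_i^+)=\mathbf{x}_1(t_i^-)+B_d$ handled by the triangle inequality in the $M$-weighted norm with $\|\sqrt{M}B_d\|\leq b_{max}\sqrt{\lambda_{max}}$, and then re-run the sublevel-set argument from $t_i^+$ onward. Your justification of the eigenvalue bound via $\|\sqrt{M}\|=\sqrt{\lambda_{max}}$ is in fact slightly cleaner than the paper's statement of that step.
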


\begin{proof}
    For simplicity of notation, we denote $\mathcal{V}(\mathbf{x}_1(t),\mathbf{x}_2(t))$ as $\mathcal{V}(t)$ and $\gamma(\|\mathbf{u}_2\|_\infty)$ as $\gamma$.
    We will show that 
    \begin{equation*}
        \mathcal{V}(t) \leq \\
        \max\big\{ \mathcal{V}(0), \gamma \big\} +  b_{max}\sqrt{\lambda_{max}}.
    \end{equation*}

    For $0 \leq t < t_i$, this holds trivially following Theorem \ref{theorem:bound}.
    
    At $t = t_i^+$, we have $\mathbf{x}_1(t_i^+) = \mathbf{x}_1(t_i^-) + B_d$, where $t_i^-$ is the instant immediately before $t_i$ and $t_i^+$ is the instant immediately after. Thus we have
    \begin{align*}
        \mathcal{V}(t_i^+) &= \mathcal{V}\Big(\mathbf{x}_1(t_i^-)+B_d,\mathbf{x}_2(t_i^-)\Big) \\
        &= \|\mathbf{x}_1(t_i^-)+B_d-P\mathbf{x}_2(t_i^-)\|_M \\
        &\leq \max_{\|B_d\| < b_{max}}\|\mathbf{x}_1(t_i^-)+B_d-P\mathbf{x}_2(t_i^-)\|_M \\
        &\leq \max_{\|B_d\| < b_{max}} \Big(\|\mathbf{x}_1(t_i^-)-P\mathbf{x}_2(t_i^-)\|_M + \|B_d\|_M \Big)\\
        &\leq \mathcal{V}(t_i^-) + b_{max}\sqrt{\lambda_{max}} \\
        &\leq \max\big\{ \mathcal{V}(0), \gamma\big\} + b_{max}\sqrt{\lambda_{max}},
    \end{align*}
    where the second inequality follows from the triangle inequality for the inner product defined by $\langle\mathbf{a},\mathbf{b}\rangle = \mathbf{a}^TM\mathbf{b}$ and the corresponding norm $\|\mathbf{a}\|_M = \sqrt{\mathbf{a}^TM\mathbf{a}}$, while the third inequality follows from the smoothness of $\mathbf{u}_2(t)$ and the fact that $\mathbf{a}^TM\mathbf{a} \leq \lambda_{max}\|\mathbf{a}\|$. 
    
    For $t>t_i$ we can follow the proof of Theorem 1 to show that
    \begin{align*}
        \mathcal{V}(t) &\leq \max\big\{ \mathcal{V}(t_i^+), \gamma\big\} \\
        &\leq \max\left\{\max\big\{ \mathcal{V}(0), \gamma\big\} + b_{max}\sqrt{\lambda_{max}}, \gamma\right\} \\
        &\leq \max\big\{ \mathcal{V}(0), \gamma\big\} + b_{max}\sqrt{\lambda_{max}}.
    \end{align*}
    
    Finally, recalling that $\|\mathbf{y}_1(t) -\mathbf{y}_2(t)\| \leq \mathcal{V}(t)$ completes the proof.
\end{proof}

To extend this result to the case of a sequence of impulse disturbances, we recall from the proof of \cite[Theorem 2]{girard2011approximate} that the decay rate of the simulation function is bounded by
\begin{equation*}
    \frac{d\mathcal{V}}{dt} \leq \lambda\big(-\mathcal{V}(\mathbf{x}_1,\mathbf{x}_2)+\gamma(\|\mathbf{u}_2\|)\big).
\end{equation*}
This minimum decay rate allows us to establish a dwell time $t_{dwell}$ such that as long as the impulse disturbances are separated by at least $t_{dwell}$, the simulation function decays enough between impulses that the resulting error bound is the same as that of a single impulse. This is stated formally as follows:

\begin{proposition}\label{proposition:delayed_impulse}
    Consider two systems of the form (\ref{eq:linear_systems_impulse_disturbance}). Let $\mathcal{V}$ be a robust simulation function of $\Sigma_2$ by $\Sigma_1$ and $u_\mathcal{V}$ be an associated interface. Let $\mathbf{u}_2(t)$ be a smooth admissible input of $\Sigma_2$ with associated state and output trajectories $\mathbf{x}_2(t)$ and $\mathbf{y}_2(t)$. Let $\mathbf{x}_1(t)$ be a state trajectory of $\Sigma_1$ satisfying
    \begin{equation*}
        \dot{\mathbf{x}}_1 = A_1\mathbf{x}_1 + B_1u_\mathcal{V} + B_dd
    \end{equation*}
    and $\mathbf{y}_1(t)$ be the associated output trajectory. Assume that $t_{dwell} \geq 1/\lambda$, where $\lambda$ is defined as per Lemma \ref{lemma:M}. Then 
    \begin{multline}\label{eq:error_bound_delayed_impulse}
        \|\mathbf{y}_1(t)-\mathbf{y}_2(t)\| \leq \\
        \max\big\{ \mathcal{V}(\mathbf{x}_1(0),\mathbf{x}_2(0)), \gamma(\|\mathbf{u}_2\|_\infty)\big\} +  b_{max}\sqrt{\lambda_{max}},
    \end{multline}
    where $\lambda_{max}$ is the maximum eigenvalue of $M$ and $b_{max}$ is the upper bound on $\|B_d\|$.
\end{proposition}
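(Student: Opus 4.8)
The plan is to prove the pointwise bound $\mathcal{V}(t) \leq \epsilon_0 + \Delta$ for all $t$, where $\epsilon_0 = \max\{\mathcal{V}(\mathbf{x}_1(0),\mathbf{x}_2(0)), \gamma(\|\mathbf{u}_2\|_\infty)\}$ and $\Delta = b_{max}\sqrt{\lambda_{max}}$; the output bound then follows immediately from (\ref{eq:error_bounded_robust_as}) as in Proposition \ref{proposition:single_impulse}. The two ingredients are already available: Proposition \ref{proposition:single_impulse} shows that each impulse raises $\mathcal{V}$ by at most $\Delta$, i.e. $\mathcal{V}(t_k^+) \leq \mathcal{V}(t_k^-) + \Delta$, and the decay estimate $\frac{d\mathcal{V}}{dt} \leq \lambda(-\mathcal{V} + \gamma(\|\mathbf{u}_2\|))$ controls $\mathcal{V}$ on impulse-free intervals. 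I would argue by induction over the impulse times $t_k \in \mathbb{T}$.

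First I would integrate the decay inequality via the comparison lemma to obtain, on any impulse-free interval $[s,t]$, the explicit bound $\mathcal{V}(t) \leq \gamma + (\mathcal{V}(s) - \gamma)e^{-\lambda(t-s)}$, writing $\gamma = \gamma(\|\mathbf{u}_2\|_\infty)$. The inductive invariant I would maintain is that $\mathcal{V}(t_k^-) \leq \epsilon_0$ immediately before every impulse. The base case holds because the interval before $t_1$ is disturbance-free, so Theorem \ref{theorem:bound} gives $\mathcal{V}(t_1^-) \leq \epsilon_0$. For the inductive step, the jump bound yields $\mathcal{V}(t_k^+) \leq \epsilon_0 + \Delta$; since $\mathcal{V}$ does not increase while it exceeds $\gamma$, it remains $\leq \epsilon_0 + \Delta$ throughout $(t_k, t_{k+1})$, and I would use the dwell time $t_{k+1} - t_k \geq t_{dwell} \geq 1/\lambda$ together with the comparison bound to show that $\mathcal{V}$ decays back to $\mathcal{V}(t_{k+1}^-) \leq \epsilon_0$, restoring the invariant. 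Stitching the intervals gives $\mathcal{V}(t) \leq \epsilon_0 + \Delta$ everywhere.

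The main obstacle is the final decay step — certifying that one dwell time suffices to absorb a full impulse jump. The delicacy is that $\mathcal{V}$ relaxes toward $\gamma$ rather than toward zero, so the decrease available over $(t_k, t_{k+1})$ is driven by the excess $\mathcal{V} - \gamma$, not by $\mathcal{V}$ itself. Feeding $\mathcal{V}(t_k^+) \leq \epsilon_0 + \Delta$ into the comparison bound and requiring $\mathcal{V}(t_{k+1}^-) \leq \epsilon_0$ reduces to a scalar inequality relating the jump $\Delta$, the gap $\epsilon_0 - \gamma$, and the contraction factor $e^{-\lambda t_{dwell}} \leq e^{-1}$; confirming that the threshold $t_{dwell} \geq 1/\lambda$ makes this inequality close the induction is the crux of the proof and the step I would treat most carefully, as it is exactly where the dwell-time hypothesis is consumed.
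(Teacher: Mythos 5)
Your overall architecture (induction over impulse times, the jump bound $\mathcal{V}(t_k^+)\leq\mathcal{V}(t_k^-)+\Delta$ from Proposition~\ref{proposition:single_impulse}, and decay on impulse-free intervals) matches the paper's. But the one step you explicitly defer --- ``confirming that the threshold $t_{dwell}\geq 1/\lambda$ makes this inequality close the induction'' --- is precisely where the argument breaks, and it does not close. The comparison lemma gives $\mathcal{V}(t_{k+1}^-)\leq \gamma+\bigl(\mathcal{V}(t_k^+)-\gamma\bigr)e^{-\lambda t_{dwell}}$, so with $\mathcal{V}(t_k^+)\leq\epsilon_0+\Delta$ and $e^{-\lambda t_{dwell}}\leq e^{-1}$, restoring the invariant $\mathcal{V}(t_{k+1}^-)\leq\epsilon_0$ requires $\gamma+(\epsilon_0-\gamma+\Delta)e^{-1}\leq\epsilon_0$, i.e.\ $\epsilon_0-\gamma\geq \Delta/(e-1)$. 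This fails whenever $\epsilon_0$ is close to $\gamma$; in particular, if $\mathcal{V}(0)\leq\gamma$ then $\epsilon_0=\gamma$ and the requirement degenerates to $\Delta e^{-1}\leq 0$. The issue is structural, not a matter of choosing a better invariant: iterating the map $v\mapsto \gamma+(v+\Delta-\gamma)e^{-1}$ from $v=\gamma$ shows that the exponential comparison bound only certifies $\limsup_k \mathcal{V}(t_k^+)\leq \epsilon_0+\tfrac{e}{e-1}\Delta\approx\epsilon_0+1.582\,\Delta$, which overshoots the claimed bound $\epsilon_0+\Delta$. Your own (correct) observation that $\mathcal{V}$ relaxes toward $\gamma$ rather than toward zero is exactly what kills the step: the available decrease over one dwell time is proportional to the excess $\mathcal{V}-\gamma$, and that excess can be as small as $\Delta$ itself right after a jump.

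The paper avoids the exponential bound entirely: it keeps the differential inequality in the form $\tfrac{d\mathcal{V}}{dt}\leq\lambda(-\mathcal{V}+\gamma)$ and argues a \emph{constant} decay rate $\tfrac{d\mathcal{V}}{dt}\leq-\lambda\, b_{max}\sqrt{\lambda_{max}}$ over the whole interval $[t_i,t_i+t_{dwell}]$, so that integrating over a dwell time of at least $1/\lambda$ removes at least $b_{max}\sqrt{\lambda_{max}}=\Delta$ and exactly undoes the jump, giving $\mathcal{V}(t_i+t_{dwell})\leq\max\{\mathcal{V}(0),\gamma\}$. Note, however, that this constant-rate bound is only valid while $\mathcal{V}(t)\geq\max\{\mathcal{V}(0),\gamma\}+\Delta$, i.e.\ while $\mathcal{V}$ sits at or above the post-jump level --- which is the opposite of the upper bound the paper cites to justify it. Your sharper exponential analysis is in fact the honest one, and it exposes that the dwell-time threshold $1/\lambda$ cannot be certified from the comparison inequality alone. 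To salvage your route you would either need a strictly larger dwell time (enough that $e^{-\lambda t_{dwell}}(\epsilon_0-\gamma+\Delta)\leq\epsilon_0-\gamma$), or a weaker conclusion with $\Delta$ replaced by $\tfrac{e}{e-1}\Delta$ (or, following the structure of Theorem~\ref{theorem:bound_impulse}, by $\max\{1,\tfrac{1}{t_{dwell}\lambda}\}$-type constants adapted to the exponential contraction).
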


\begin{proof}
    We will show that if there is an impact at time $t_i$, the simulation function at time $t_i+t_{dwell}$ is bounded by
    \begin{equation*}
        \mathcal{V}(t_i+t_{dwell}) \leq \max\big\{\mathcal{V}(0),\gamma\big\},
    \end{equation*}
    where $\mathcal{V}(t)$ and $\gamma$ are again shorthand for $\mathcal{V}\big(\mathbf{x}_1(t), \mathbf{x}_2(t)\big)$ and $\gamma(\|\mathbf{u}_2\|_\infty)$ respectively.
    
    For $t_i \leq t \leq t_i+t_{dwell}$, we have the following:
    \begin{align*}
        \frac{d\mathcal{V}}{dt} &\leq \lambda\big(-\mathcal{V}(t)+\gamma\big) \\
        &\leq \lambda\big(-\max\big\{ \mathcal{V}(0),\gamma\big\} - b_{max}\sqrt{\lambda_{max}}+\gamma\big) \\
        &\leq -\lambda b_{max}\sqrt{\lambda_{max}}
    \end{align*}
    since $\mathcal{V}(t) \leq \max\big\{\mathcal{V}(0),\gamma\big\} + b_{max}\sqrt{\lambda_{max}}$ by Proposition \ref{proposition:single_impulse}. We can then compute the simulation function at $t_i+t_{dwell}$ as follows:
    \begin{align*}
        &\mathcal{V}(t_i+t_{dwell}) = \mathcal{V}(t_i) + \int_{t_i}^{t_{i}+t_{dwell}}\frac{d\mathcal{V}}{dt}dt \\
        &\leq \max\big\{ \mathcal{V}(0),\gamma\big\} + b_{max}\sqrt{\lambda_{max}} - t_{dwell}\lambda b_{max} \sqrt{\lambda_{max}} \\
        &\leq \max\big\{ \mathcal{V}(0),\gamma\big\}.
    \end{align*}
    
    Therefore $\mathcal{V}(t_i+t_{dwell}) \leq \max\big\{\mathcal{V}(0),\gamma\big\}$. All that remains is to note that since subsequent impulses occur at $t \geq t_i+t_{dwell}$, the error bound (\ref{eq:error_bound_delayed_impulse}) is enforced for all time by Proposition \ref{proposition:single_impulse}. 
\end{proof}

Interestingly, the minimum dwell time of $1/\lambda$ does not depend on the magnitude of the disturbances $B_d$ but only on $\lambda$, which is essentially the decay rate of the simulation function. A larger disturbance magnitude does increase the error bound, however, through the parameter $b_{max} \geq \|B_d\|$.

Finally, we consider the case of arbitrary dwell times, which may be shorter than $1/\lambda$, and derive an associated error bound:

\begin{theorem}\label{theorem:bound_impulse}
    Consider two systems of the form (\ref{eq:linear_systems_impulse_disturbance}). Let $\mathcal{V}$ be a robust simulation function of $\Sigma_2$ by $\Sigma_1$ and $u_\mathcal{V}$ be an associated interface. Let $\mathbf{u}_2(t)$ be a smooth admissible input of $\Sigma_2$ with associated state and output trajectories $\mathbf{x}_2(t)$ and $\mathbf{y}_2(t)$. Let $\mathbf{x}_1(t)$ be a state trajectory of $\Sigma_1$ satisfying
    \begin{equation*}
        \dot{\mathbf{x}}_1 = A_1\mathbf{x}_1 + B_1u_\mathcal{V} + B_dd
    \end{equation*}
    and $\mathbf{y}_1(t)$ be the associated output trajectory. Assume that $t_{dwell} > 0$. Then 
    \begin{multline}\label{eq:error_bound_impulse}
        \|\mathbf{y}_1(t)-\mathbf{y}_2(t)\| \leq \max\big\{ \mathcal{V}(0), \gamma(\|\mathbf{u}_2\|_\infty)\big\} \\
        +  \max\left\{1,\frac{1}{t_{dwell}\lambda}\right\}b_{max}\sqrt{\lambda_{max}},
    \end{multline}
    where $\lambda_{max}$ is the maximum eigenvalue of $M$, $b_{max}$ is the upper bound on $\|B_d\|$, and $\lambda$ is defined by Lemma \ref{lemma:M}. 
\end{theorem}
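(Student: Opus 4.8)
The plan is to bound the robust simulation function $\mathcal{V}(t) := \mathcal{V}(\mathbf{x}_1(t),\mathbf{x}_2(t))$ for all $t$ and then transfer the bound to $\|\mathbf{y}_1(t)-\mathbf{y}_2(t)\|$ through (\ref{eq:error_bounded_robust_as}), exactly as in the preceding propositions. Following the shorthand of Proposition~\ref{proposition:single_impulse}, write $\gamma := \gamma(\|\mathbf{u}_2\|_\infty)$, let $c := b_{max}\sqrt{\lambda_{max}}$ denote the per-impulse jump established there, and set $V_0 := \max\{\mathcal{V}(0),\gamma\}$. The goal is to show $\mathcal{V}(t) \le V_0 + \max\{1, 1/(t_{dwell}\lambda)\}\,c$, and I would split on the sign of $t_{dwell}\lambda - 1$. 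In the case $t_{dwell}\ge 1/\lambda$ we have $\max\{1,1/(t_{dwell}\lambda)\}=1$, so the statement is precisely Proposition~\ref{proposition:delayed_impulse} and nothing further is required.

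The substantive case is $t_{dwell}<1/\lambda$, where $h := c/(t_{dwell}\lambda) > c$ and the target peak is $\bar{V} := V_0 + h$. I would establish the invariant $\mathcal{V}(t)\le \bar{V}$ for all $t$ by induction over the impulse times $t_1 < t_2 < \cdots$, carrying the strengthened hypothesis $\mathcal{V}(t_i^-)\le \bar{V}-c$ just before each impulse. The base case holds because $\mathcal{V}(0)\le V_0 = \bar{V}-h \le \bar{V}-c$ (using $h>c$), and by Theorem~\ref{theorem:bound} $\mathcal{V}$ cannot exceed $V_0$ before the first impulse. For the inductive step, Proposition~\ref{proposition:single_impulse} gives $\mathcal{V}(t_i^+)\le \mathcal{V}(t_i^-)+c\le \bar{V}$; since there is no disturbance on $(t_i, t_{i+1})$ and $\mathcal{V}$ is strictly decreasing whenever it exceeds $\gamma$, the peak on that interval occurs at $t_i^+$. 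It then remains to show that over the dwell interval, whose length is at least $t_{dwell}$, $\mathcal{V}$ decays back to at most $\bar{V}-c$, which restores the hypothesis at $t_{i+1}^-$.

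The step I expect to be the main obstacle is this inter-impulse decay estimate. Starting from the decay bound $\frac{d\mathcal{V}}{dt}\le \lambda(\gamma - \mathcal{V})$ recalled before Proposition~\ref{proposition:delayed_impulse}, at the post-impulse peak $\mathcal{V}=\bar{V}$ one has $\frac{d\mathcal{V}}{dt}\le \lambda(\gamma-\bar{V})\le \lambda(V_0-\bar{V}) = -\lambda h = -c/t_{dwell}$, where $\gamma\le V_0$ is used. Treating this peak rate as a uniform decay rate over an interval of length $t_{dwell}$ yields a net decrease of at least $c$, in exactly the style of Proposition~\ref{proposition:delayed_impulse}, so $\mathcal{V}(t_i+t_{dwell})\le \bar{V}-c$ and the induction closes. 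The delicate point is that the instantaneous rate $\lambda(\gamma-\mathcal{V})$ is largest in magnitude precisely at the peak and shrinks as $\mathcal{V}$ decreases, so the uniform-rate estimate must be invoked carefully; I would either adopt this peak-rate argument consistently with Proposition~\ref{proposition:delayed_impulse}, or replace it by the exact solution $\mathcal{V}(t_i+t_{dwell})-\gamma \le (\bar{V}-\gamma)e^{-\lambda t_{dwell}}$, which is fully rigorous but sharpens the constant from $1/(t_{dwell}\lambda)$ to $1/(1-e^{-\lambda t_{dwell}})$. Combining the two cases and applying (\ref{eq:error_bounded_robust_as}) then yields (\ref{eq:error_bound_impulse}).
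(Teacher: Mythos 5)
Your argument is essentially the paper's. The paper makes the same split at $t_{dwell}=1/\lambda$, uses the same per-impulse jump $c=b_{max}\sqrt{\lambda_{max}}$ from Proposition~\ref{proposition:single_impulse}, and tracks the post-impulse peaks $\max\{\mathcal{V}(0),\gamma\}+X_n c$ through the recursion $X_1=1$, $X_{i+1}=(1-t_{dwell}\lambda)X_i+1$, whose fixed point $1/(t_{dwell}\lambda)$ is exactly your invariant level $\bar V$; your induction with the strengthened hypothesis $\mathcal{V}(t_i^-)\le\bar V-c$ is the fixed-point form of the same computation. You have also correctly isolated the step that carries all the weight: the claim that $\mathcal{V}$ drops by at least $c$ over one dwell interval, obtained by freezing the decay rate at its peak value $-\lambda(\bar V-\gamma)$. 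The paper makes precisely this move, in both Proposition~\ref{proposition:delayed_impulse} and this theorem, without comment.

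One correction, though, on your proposed rigorous repair, because you have the direction of the comparison backwards. Integrating $\frac{d\mathcal{V}}{dt}\le\lambda(\gamma-\mathcal{V})$ exactly gives $\mathcal{V}(t_i+t_{dwell})-\gamma\le e^{-\lambda t_{dwell}}\big(\mathcal{V}(t_i^+)-\gamma\big)$, so the peaks obey $P_{n+1}-\gamma\le e^{-\lambda t_{dwell}}(P_n-\gamma)+c$ with fixed point $c/(1-e^{-\lambda t_{dwell}})$. Since $1-e^{-x}<x$ for $x>0$, this constant is strictly \emph{larger} than $c/(t_{dwell}\lambda)$ (and larger than $c$), so the exact solution does not sharpen the bound --- it weakens it, and it does not recover (\ref{eq:error_bound_impulse}) as stated. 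Equivalently, the guaranteed shrinkage factor over a dwell interval is $e^{-\lambda t_{dwell}}$, which is worse than the factor $1-\lambda t_{dwell}$ that the uniform-peak-rate shortcut assumes; the shortcut is therefore not a conservative approximation but an inequality used in the wrong direction. A fully rigorous version of the theorem should replace $\max\{1,1/(t_{dwell}\lambda)\}$ by $1/(1-e^{-\lambda t_{dwell}})$, which dominates it for every $t_{dwell}>0$ and covers both cases at once. This criticism applies equally to the paper's own proof; your write-up matches it step for step and has the virtue of flagging the weak point explicitly.
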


\begin{proof}
    Consider the case $0 < t_{dwell} \leq 1/\lambda$. Again, we will establish a bound on $\mathcal{V}(t)$. After the first impact, we have
    \begin{equation*}
        \mathcal{V}(t_1^+) \leq \max\{\mathcal{V}(0),\gamma(\|\mathbf{u}_2\|_\infty)\}
        + b_{max}\sqrt{\lambda_{max}}.
    \end{equation*}
    After the second impact, we have
    \begin{multline*}
        \mathcal{V}(t_2^+) \leq \max\{\mathcal{V}(0),\gamma(\|\mathbf{u}_2\|_\infty)\} \\
        + (2-t_{dwell}\lambda) b_{max}\sqrt{\lambda_{max}}.
    \end{multline*}
    After the $n^{th}$ impact, we have
    \begin{equation*}
        \mathcal{V}(t_n^+) \leq \max\{\mathcal{V}(0),\gamma(\|\mathbf{u}_2\|_\infty)\}
        + X_{n}b_{max}\sqrt{\lambda_{max}},
    \end{equation*}
    where $X_n$ is the $n^{th}$ element of the series defined by $X_1 = 1$, $X_{i+1} = X_i + 1 - (t_{dwell}\lambda)X_i$. Rewriting this series as $X_{i+1} = (1-t_{dwell}\lambda)X_i + 1$, we can see that for $0 < t_{dwell}\lambda \leq 1$, 
    \begin{equation*}
        \limsup_{i \to \infty} X_i = \frac{1}{t_{dwell}\lambda}.
    \end{equation*}
    Therefore if $0 < t_{dwell} \leq 1/\lambda$, then
    \begin{equation*}
        \mathcal{V}(t) \leq \max\{\mathcal{V}(0),\gamma(\|\mathbf{u}_2\|_\infty)\}
        + \frac{1}{t_{dwell}\lambda} b_{max}\sqrt{\lambda_{max}}
    \end{equation*}
    for all $t$. 
    
    For the case of $t_{dwell} > 1/\lambda$, we recover the error bound of Proposition \ref{proposition:delayed_impulse}. Putting these cases together, and noting that $\mathcal{V}(t)$ bounds the output error, the Theorem holds. 
\end{proof}

\section{Example}\label{sec:example}

\begin{figure}
    \centering
    \includegraphics[width=0.8\linewidth]{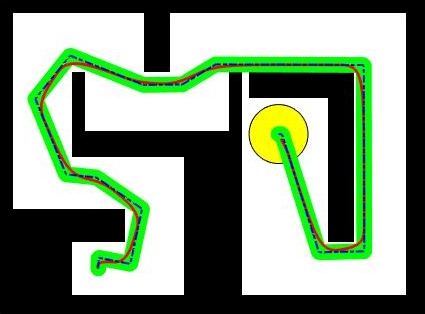}
    \caption{A robot navigates a passageway to reach a goal (yellow circle) without any disturbances. A classical approximate simulation relation guarantees that the concrete system (red solid line) will stay close (green shaded region) to the abstract system trajectory (blue dashed lines). }
    \label{fig:maze_no_disturbance}
\end{figure}

To illustrate the importance of accounting for disturbances when using the framework of approximate simulation, we present a variation of the example presented in \cite{girard2009hierarchical}. In this example, shown in Figure \ref{fig:maze_no_disturbance}, a robot must navigate a narrow passageway before reaching a goal region (yellow). The concrete system is a robot with triple integrator dynamics, i.e.,
\begin{equation*}
    A_1 = 
    \begin{bmatrix}
        0_2 & I_2 & 0_2 \\
        0_2 & 0_2 & I_2 \\
        0_2 & 0_2 & 0_2
    \end{bmatrix}, ~~ 
    B_1 =
    \begin{bmatrix}
        0_2 \\
        0_2 \\
        I_2
    \end{bmatrix}, ~~
    C_1 =
    \begin{bmatrix}
        I_2 & 0_2 & 0_2
    \end{bmatrix},
\end{equation*}
while the abstract system is a single integrator, i.e.,
\begin{equation*}
    A_2 = 0_2, ~~ B_1 = I_2, ~~ C_1 = I_2.
\end{equation*}
The output of both systems represents the position of the robot in the plane. 

Following \cite[Section 5]{girard2009hierarchical}, we found an approximate simulation by choosing
\begin{align*}
    K &= -
    \begin{bmatrix}
        52I_2 & 52.3I_2 & 13I_2
    \end{bmatrix}, \\
    P^T &= \begin{bmatrix}
        I_2 & 0_2 & 0_2
    \end{bmatrix}, \\
    Q &= 0_2, \\
    \lambda &= 1.1,
\end{align*}
and finding $M$ with semi-definite programming (see Section~\ref{subsec:approximate_simulation}). After specifying an a-priori bound on $\|\textbf{u}_2\|$, we computed the associated output error bound (\ref{eq:error_bound}) to be $\epsilon=0.2258$. The abstract system is fully actuated, making it easy to find a trajectory that reaches the goal and stays $\epsilon$ away from all obstacles. In this example, we used the probabilistic roadmap strategy \cite{kavraki1996probabilistic} to find such a trajectory. This is shown in Figure~\ref{fig:maze_no_disturbance} by the blue dashed lines. The green shaded region represents the area in which the concrete system is guaranteed to remain. The associated concrete system trajectory (red solid line) tracks the abstract trajectory effectively, staying within the safe region and eventually arriving at the goal. Note that the output error bound $\epsilon$ is fairly tight. 

\subsection{Bounded Disturbances}

We now consider planning with this same approximate simulation relation under bounded disturbances. Specifically, we choose
\begin{align*}
    B_d^T &= [-0.2~-0.2~0~0~0~0], \\
    d(t) &= 1.
\end{align*}
This represents a constant disturbance pushing the robot down and to the left. This is analogous to what would happen if there was a steady gust of wind pushing the robot when it was deployed in the real world. 

\begin{figure}[H]
    \centering
    \includegraphics[width=0.8\linewidth]{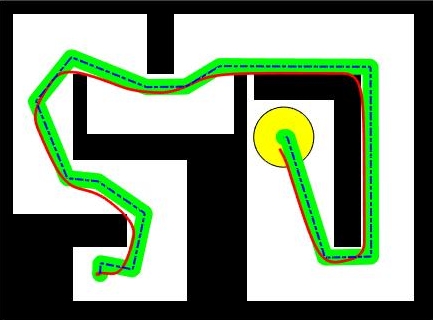}
    \caption{When a bounded disturbance is applied to the concrete system, the classical approximate simulation relation breaks down. The robot leaves the safe region around the planned trajectory, occasionally colliding with obstacles.}
    \label{fig:maze_continuous_fail}
\end{figure}

First, we naively (and improperly) apply the classical approximate simulation relation to this case. The resulting trajectories are shown in Figure \ref{fig:maze_continuous_fail}. The robot is unable to stay $\epsilon$-close to the planned abstract system trajectory, leading to several collisions with the walls of the passageway. 

\begin{figure}[H]
    \centering
    \includegraphics[width=0.8\linewidth]{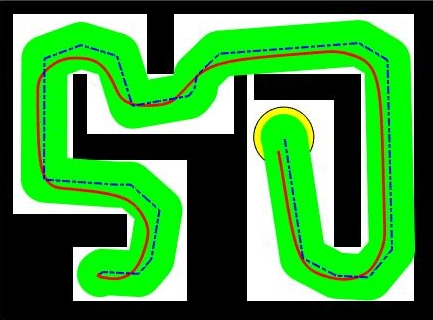}
    \caption{Using the robust simulation relation and the correct output error bound for bounded disturbances (\ref{eq:error_bound_bounded}) results in a more conservative plan for the abstract system and allows the robot to reach the goal. }
    \label{fig:maze_continuous_corrected}
\end{figure}

This motivates the use of a robust approximate simulation relation. Following Theorems \ref{theorem:simulation_bounded} and \ref{theorem:bound_bounded}, we find the correct error bound (\ref{eq:error_bound_bounded}) of $\epsilon=0.6767$. Using this robust approximate simulation relation to plan an abstract system trajectory, we obtain the more conservative results shown in Figure \ref{fig:maze_continuous_corrected}. The robot is able to stay within these relaxed error bounds despite the disturbances, and successfully reaches the goal. We can also see that the associated error bound is reasonably tight. 

\subsection{Impulse Disturbances}

Finally, we consider the case of unbounded (impulse) disturbances. We use the same disturbance mapping $B_d$ as above, and consider impulses occurring every 2.5s. This corresponds to the case of the robot experiencing a push at regular intervals. As in the case of bounded disturbances, improperly applying a classical approximate simulation relation results in the robot leaving the safe region and colliding with obstacles (Figure \ref{fig:maze_impulse_fail}).  

\begin{figure}[H]
    \centering
    \includegraphics[width=0.8\linewidth]{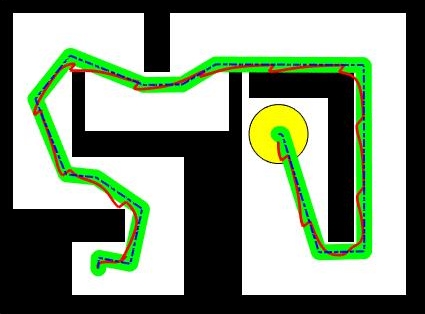}
    \caption{When the concrete system experiences unbounded impulse disturbances, the classical approximate simulation relation breaks down. The robot leaves the safe region around the planned trajectory, occasionally colliding with obstacles.}
    \label{fig:maze_impulse_fail}
\end{figure}

Following Theorem \ref{theorem:bound_impulse}, we find a relaxed error bound of $\epsilon=0.678$. Replanning with this revised margin leads to the safe plan shown in Figure \ref{fig:maze_impulse_corrected}. The robot stays within the safe region, avoids collisions, and reaches the goal.

\begin{figure}[H]
    \centering
    \includegraphics[width=0.8\linewidth]{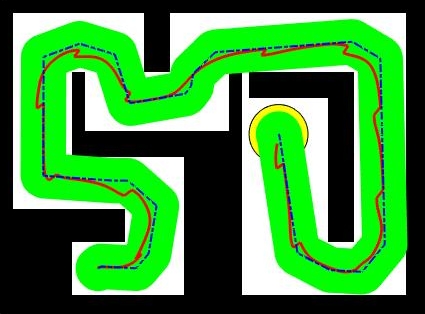}
    \caption{Using the robust simulation relation and the correct output error bound for impulse disturbances (\ref{eq:error_bound_impulse}) results in a more conservative plan for the abstract system and allows the robot to reach the goal. }
    \label{fig:maze_impulse_corrected}
\end{figure}

\section{Conclusion}\label{sec:conclusion}

We proposed a notion of robust approximate simulation as a generalization of approximate simulation. This framework can be used for formally correct hierarchical control in the case when the concrete system is subject to external disturbances. We provided detailed results, including the associated error bounds, for linear systems with two types of disturbances. First, we considered the standard case of arbitrary but bounded disturbances. Second, inspired by impulse disturbances arising during footfalls with legged locomotion, we considered the case where the disturbance signal is a sequence of (unbounded) impulses. This work is one step toward bringing together the best of formal methods and classical control theory to control physical systems. 

Future work will focus on increasing the generality of these results. We anticipate extensions from continuous systems to general transition systems, which will enable us to consider robust approximate  simulation relations between hybrid and discrete-event systems. The noise models we considered in this paper might similarly be extended to the case of finite-energy and stochastic disturbances. Finally, since robust approximate simulation is closely related to Lyapunov stability through the simulation function, we might harness Lyapunov-based techniques like SOS programming to extend these results to nonlinear systems.

\bibliographystyle{IEEEtranN}
{\footnotesize
\bibliography{references}}

\end{document}